\numberwithin{equation}{section}
\newtheorem{theorem}{Theorem}
\newtheorem{lemma}{Lemma}
\newtheorem{corollary}{Corollary}
\newtheorem{definition}{Definition}[section]
\newtheorem{example}{Example}
\def\a{\alpha}
\def\t{\tau}
\def\e{\epsilon}
\def\d{\delta}
\def\l{\lambda}
\def\g{\gamma}
\def\hx{\hat{x}}
\def\bx{{\bf x}}
\def\by{{\bf y}}
\def\ttr{\texttt{RED}}
\def\ttb{\texttt{BLUE}}
\def\var{\text{Var}}
\def\ie{\emph{i.e.}}
\def\eg{\emph{e.g.}}
\def\etal{\emph{et al. }}
\newcommand{\expect}{\mathbb{E}}
\newcommand{\prob}{\mathbb{P}}
\newcommand{\real}{\mathbb{R}}
\def\beq{\begin {equation}}
\def\eeq{\end {equation}}
\def\beqar{\begin {eqnarray*}}
\def\eeqar{\end {eqnarray*}}
\def\beqarn{\begin {eqnarray}}
\def\eeqarn{\end {eqnarray}}
\newtheorem{thm}{Theorem}
\newtheorem{mydef}{Definition}[section]
\newtheorem{cor}{Corollary}[section]
\newtheorem{lem}{Lemma}[section]
\newtheorem{prop}{Proposition}[section]
\title{Biased Assimilation, Homophily and the Dynamics of Polarization}
\date{\today}
\author{
Pranav Dandekar\footnote{Department of Management Science \& Engineering, Stanford University, Stanford, CA. Email: \href{mailto:ppd@stanford.edu}{ppd@stanford.edu}.}
\and
Ashish Goel\footnote{Departments of Management Science \& Engineering and (by courtesy) Computer Science, Stanford University, Stanford, CA. Email: \href{mailto:ashishg@stanford.edu}{ashishg@stanford.edu}. }
\and
David Lee\footnote{Department of Electrical Engineering, Stanford University, Stanford, CA. Email: \href{mailto:davidtlee@stanford.edu}{davidtlee@stanford.edu}.}
}
\begin{document}
\maketitle
\begin{abstract}
Are we as a society getting more polarized, and if so, why? 
We try to answer this question through a model of opinion formation.
Empirical studies have shown that homophily results in polarization.
However, we show that DeGroot's well-known model of opinion formation based on repeated averaging can never be polarizing, even if individuals are arbitrarily homophilous. 
We generalize DeGroot's model to account for a phenomenon well-known in social psychology as \textit{biased assimilation}: when presented with mixed or inconclusive evidence on a complex issue, individuals draw undue support for their initial position thereby arriving at a more extreme opinion. 
We show that in a simple model of homophilous networks, our biased opinion formation process results in either polarization, persistent disagreement or consensus depending on how biased individuals are.
In other words, homophily alone, without biased assimilation, is not sufficient to polarize society.
Quite interestingly, biased assimilation also provides insight into the following related question: do internet based recommender algorithms that show us personalized content contribute to polarization?
We make a connection between biased assimilation and the polarizing effects of some random-walk based recommender algorithms that are similar in spirit to some commonly used recommender algorithms.
\end{abstract}

\newpage

\section{Introduction}
The issue of polarization in society has been extensively studied and vigorously debated in the academic literature as well as the popular press over the last few decades.
In particular, are we as a society getting more polarized, if so, why, and how can we fix it?
Different empirical studies arrive at different answers to this question depending on the context and the metric used to measure polarization.

Evidence of polarization in politics has been found in the increasingly partisan voting patterns of the members of Congress \cite{poole-1984:polarization-politics, poole-1991:congressional-voting} and in the extreme policies adopted by candidates for political office \cite{hill:divided}.
McCarty \etal \cite{poole:polarized} claim via rigorous analysis that America is polarized in terms of political attitudes and beliefs.
Phenomena such as segregation in urban residential neighborhoods (\cite{schelling-1971:segregation, bruch-2006:neighborhood-choices, bobby-kleinberg:schelling}), the rising popularity of overtly partisan television news networks \cite{nyt:cnn2, nyt:cnn1}, and the readership and linking patterns of blogs  along partisan lines \cite{Adamic-2005:blog,hargittai-2007:cross-bloggers,gilbert-2009:echo-chambers,Lawrence-2010:self-segregation} can all be viewed as further evidence of polarization.
On the other hand, it has also been argued on the basis of detailed surveys of public opinion that society as a whole is not polarized, even though the media and the politicians make it seem so \cite{wolfe:one-nation, fiorina:culture-wars}.
We adopt the view that polarization is not a property of a state of society; instead it is a property of the dynamics of interaction between individuals.

It has been argued that homophily, \ie, greater interaction with like-minded individuals, results in polarization \cite{baron-1996:corroboration, sunstein:republic.com, gilbert-2009:echo-chambers}. 
Evidence in support of this argument has been used to claim that the rise of cable news, talk radio and the Internet has contributed to polarization: the increased diversity of information sources coupled with the increased ability to narrowly tailor them to one's specific tastes (either manually or algorithmically through, for example, recommender systems)  has an echo-chamber effect which ultimately results in increased polarization.


A rich body of work attempts to explain polarization through variants of a well-known opinion formation model due to DeGroot \cite{degroot:consensus}.
In DeGroot's model, individuals are connected to each other in a social network.
The edges of the network have associated weights representing the extent to which neighbors influence each other's opinions.
Individuals update their opinion as a weighted average of their current opinion and that of their neighbors.
Variants of this model (\eg, \cite{friedkin-johnsen, krause:bounded-conf, ace-oz:fluctuations, kleinberg-opinion}) account for the empirical observation that in many cases there is persistent disagreement between individuals and consensus is never reached.
However, we show that repeated averaging of opinions, which underlies these models, always results in opinions that are less divergent compared to the initial opinions, even if individuals are arbitrarily homophilous.
As a result, this entire body of work appears to fall short of explaining polarization which is generally perceived to mean an \textit{increased} divergence of opinions, not just persistent disagreement.
In this paper, we seek a more satisfactory model of opinion formation that (a) is informed by a theory of how individuals actually form opinions, and (b) produces an increased divergence of opinions under intuitive conditions.

We base our model on a well-known phenomenon in social psychology called \textit{biased assimilation}, according to which individuals process new information in a biased manner whereby they readily accept confirming evidence while critically examining disconfirming evidence.
Suppose that individuals with opposing views on an issue are shown mixed or inconclusive evidence.
Intuitively, exposure to such evidence would engender greater agreement, or at least a moderation of views. 
However, in a seminal paper, Lord \etal \cite{ross-lepper} showed through experiments that biased assimilation causes individuals to arrive at \textit{more extreme} opinions after being exposed to \textit{identical}, inconclusive evidence.
This finding has been reproduced in many different settings over the years (\eg, \cite{miller:attitude-polarization, munro:1996-prez-debate, taber:motivated-skepticism}).
We use biased assimilation as the basis of our model of opinion formation and show that in our model homophily alone, without biased assimilation, is not sufficient to polarize society.

\subsection{Summary of Contributions}
We propose a generalization of DeGroot's model that accounts for biased assimilation.
Like DeGroot's model, our opinion formation process unfolds over an exogenously defined social network represented by a weighted undirected graph $G  = (V,E)$.
Each individual $i\in V$ has an opinion $x_i(t) \in [0,1]$, which represents his degree of support at time step $t$ for the position represented by 1.
In order to weight confirming evidence more heavily relative to disconfirming evidence, opinions are updated as follows:
individual $i$ weights each neighbor $j$'s opinion $x_j(t)$ by a factor $(x_i(t))^{b_i}$ and weights the opposing view $(1-x_j(t))$ by a factor $(1-x_i(t))^{b_i}$, where $b_i\ge 0$ is a \textit{bias parameter}. 
Informally, $b_i$ represents the bias with which $i$ assimilates his neighbors opinions.
When $b_i =0$, our model reduces to DeGroot's, and corresponds to unbiased assimilation. 
Our biased opinion formation process mathematically reproduces the effect empirically observed by Lord \etal (Theorem~\ref{thm:single-agent}).

We measure divergence of opinions in terms of the \textit{network disagreement index} (NDI), which we define to be $\sum_{(i,j)\in E} w_{ij}(x_i(t) - x_j(t))^2$.
It is similar to the notion of social cost used by Bindel \etal \cite{kleinberg-opinion}.
We say that an opinion formation process is \textit{polarizing} if the NDI at the end of the process is greater than that initially.
We show that:
\begin{itemize}
\item
(Theorem~\ref{thm:degroot-not-polarizing}) DeGroot-like repeated averaging processes can never be polarizing, even if individuals are arbitrarily homophilous (\ie, the underlying network is presented adversarially as opposed to based on a mathematical model).
\item
(Theorem~\ref{thm:homophily-polarization}) The biased opinion formation process over a simple model of networks with homophily results in polarization if  individuals' bias parameter $b\ge 1$. If $b < 1$, the process results in either persistent disagreement or consensus depending on the degree of homophily.
\end{itemize}
In summary, we show that homophily alone, without biased assimilation, is not sufficient to polarize society.
This conclusion disagrees with the literature (\eg, \cite{baron-1996:corroboration, sunstein:republic.com}) that proposes homophily as the predominant cause of polarization. 
As the reader might expect, there are many ways of mathematically measuring the divergence of opinions among individuals. 
Many of our results hold for more general measures of divergence, which we discuss in Section~\ref{sec:polarization-metrics}.

The notion of biased assimilation also provides insight into the following related question: do internet based recommender algorithms that show us personalized content contribute to polarization?
We analyze the polarizing effects of  three recommender algorithms---SimpleSALSA, SimplePPR, and SimpleICF---that are similar in spirit to three well-known algorithms from the literature: SALSA \cite{salsa}, Personalized PageRank \cite{pagerank}, and Item-based Collaborative Filtering \cite{item-cf}.
For a simple, natural model of the underlying user-item graph, and under reasonable assumptions, we show that SimplePPR, which recommends the item that is most relevant to a user based on a PageRank-like score, is always polarizing (Theorem~\ref{thm:ppr-n->infty}).
On the other hand, SimpleSALSA and SimpleICF, which first choose a random item liked by the user and recommend an item similar to that item, 
are polarizing only if individuals are biased (Theorem~\ref{thm:salsa-icf-n->infty}).
Designing algorithms and online social systems that reduce polarization, for example, by counteracting biased assimilation is a promising research direction.

\section{Model}
Our opinion formation process unfolds over a social network represented by a \textit{connected weighted undirected} graph $G = (V,E,w)$.
The nodes in $V$ represent individuals and the edges represent friendships or relationships between them.
Let $|V| = n$. 
An edge $(i,j)\in E$ is associated with a weight $w_{ij} > 0$ representing the degree of influence $i$ and $j$ have on each other.
Each individual $i\in V$ also has an associated weight $w_{ii} \ge 0$ representing the degree to which the individual weights his own opinions.
We will denote by $N(i)$ the set of neighbors of $i$, that is, $N(i) :=  \{ j \in V : (i,j)\in E\}$.

An individual $i$ has an opinion $x_i(t)\in [0,1]$ at time step $t = 0, 1, 2, \dotsc$.
The extreme opinions 0 and 1 represent two opposing points of view on an issue. 
So $x_i(t)$ can be interpreted as individual $i$'s degree of support at time $t$ for the position represented by 1, and $1-x_i(t)$ as the degree of support for the position represented by 0.
Let $\bx(t)\in [0,1]^n$ denote the vector of opinions at time $t$.
An opinion formation process is simply a description of how individuals update their opinions, \ie, for each individual $i\in V$, it defines $x_i(t+1)$ as a function of the vector of opinions, $\bx(t)$, at time $t$.

\subsection{Measuring Polarization}
We view polarization as a property of an opinion formation process instead of a property of a state of the network.
We characterize polarization as a \textit{verb} as opposed to a \textit{noun}, \ie, we say that an opinion formation process is \textit{polarizing} if it results in an increased divergence of opinions. One could mathematically capture divergence of opinions in many different ways. We measure divergence in terms of the \textit{network disagreement index} defined below.
\begin{definition}[Network Disagreement Index (NDI)]
Given a graph $G = (V,E,w)$ and a vector of opinions $\bx\in [0,1]^n$ of individuals in $V$, the \emph{network disagreement index} $\eta(G,\bx)$ is defined as
\beq\label{def:eta}
\eta(G, \bx) := \sum_{(i,j)\in E} w_{ij}(x_i - x_j)^2
\eeq
\end{definition}
Consider an opinion formation process over a network $G = (V,E,w)$ that transforms a set of initial opinions $\bx\in [0,1]^n$ into a set of opinions $\bx'\in[0,1]^n$. Then, we say the process is polarizing if $\eta(G,\bx') > \eta(G,\bx)$, and vice versa.

The NDI is similar to the notion of social cost used by Bindel \etal \cite{kleinberg-opinion}.
Each term $w_{ij}(x_i-x_j)^2$ can be viewed as the cost of disagreement imposed upon $i$ and $j$.
This view that the social cost depends on the magnitude of the difference of opinions along edges is consistent with theories in social psychology according to which attitude conflicts in relationships are a source of psychological stress or instability \cite{heider:balance, festinger:cognitive}.
The NDI  captures the phenomenon of \textit{issue radicalization}, \ie, pre-existing groups of individuals becoming progressively more extreme.
Admittedly, it does not entirely capture an aspect of polarization called \textit{issue alignment} \cite{baldassarri-gelman} whereby individuals with diverse opinions organize into ideologically coherent, but opposing factions.
However, there is significant empirical evidence \cite{poole:polarized, baldassarri-gelman, nyt:sunstein} that issue radicalization is more prevalent compared to issue alignment, and hence NDI captures the most salient aspects of polarization.
Many of our results hold for more general measures of divergence which we discuss in Section~\ref{sec:polarization-metrics}.


\subsection{DeGroot's Repeated Averaging Model}
In his seminal work on opinion formation, DeGroot \cite{degroot:consensus} proposed a model where at each time step, individuals simultaneously update their opinion to the weighted average of their neighbors' and their own opinion at the previous time step.
 \begin{definition}[DeGroot's Repeated Averaging Process]
The opinion of individual $i$ at time $t+1$, $x_i(t+1)$, is given by
\beq\label{def:degroot}
x_i(t+1) = \frac{w_{ii}x_i(t) + s_i(t)}{w_{ii} + d_i}
\eeq
\end{definition}
where $s_i(t) := \sum_{j\in N(i)} w_{ij}x_j(t)$ is the weighted sum of the opinions of $i$'s neighbors,  and $d_i := \sum_{j\in N(i)} w_{ij}$ is $i$'s weighted degree.

Recall that $x_j(t)$ and $1-x_j(t)$ represent the degree of support for extremes $1$ and $0$, respectively. 
Then, opinion update under DeGroot's process is equivalent to taking a weighted average of the  total support for 0 and that for 1.
The weight that individual $i$ places on 1 (and on 0) is computed by summing the degrees of support of $i$'s neighbors weighted by the influence of each neighbor on $i$.

\subsection{Biased Opinion Formation Model}\label{subsec:biased-model}
We generalize DeGroot's model to account for \textit{biased assimilation}.
Biased assimilation is a well-known phenomenon in social psychology described by Lord \etal \cite{ross-lepper} in their seminal paper as follows: 
\begin{quote}
\small
People who  hold  strong opinions on  complex social issues  are likely to examine relevant empirical evidence in  a  biased  manner.  They  are  apt to  accept ``confirming"  evidence  at  face value  while  subjecting  ``disconfirming"  evidence  to critical evaluation, and  as a result to  draw undue support  for  their initial positions  from  mixed  or  random empirical  findings.
\end{quote}
Lord \etal \cite{ross-lepper} showed through experiments that biased assimilation of mixed or inconclusive evidence does indeed result in more extreme opinions.

In order to account for biased assimilation, we propose a \textit{biased opinion formation process}.
Recall that $x_i(t)$ can be viewed as the degree of support for the position represented by 1.
Individuals weight confirming evidence more heavily relative to disconfirming evidence by updating their opinions as follows:
individual $i$ weights each neighbor $j$'s support for 1 (\ie, $x_j(t)$) by an additional factor $(x_i(t))^{b_i}$,  where $b_i\ge 0$ is a \textit{bias parameter}. 
Therefore, $x_i(t+1) \propto (x_i(t))^{b_i}w_{ij}x_j(t)$, Similarly, $i$ weights $j$'s support for 0 (\ie, $1-x_j(t)$) by $(1-x_i(t))^{b_i}$, and so $(1-x_i(t+1)) \propto (1-x_i(t))^{b_i}w_{ij}(1-x_j(t))$.
Informally, $b_i$ represents the bias with which $i$ assimilates his neighbors opinions.

{\bf Illustrative example}. Consider a graph with two nodes, $i$ and $j$, connected by an edge with a weight $w_{ij}$. Then, according to the biased opinion formation process, $i$'s opinion at time $t+1$, $x_i(t+1)$, is given by
\[
x_i(t+1) = \frac{w_{ii}x_i(t) + (x_i(t))^{b_i}w_{ij}x_j(t)}{w_{ii} + (x_i(t))^{b_i}w_{ij}x_j(t) + (1-x_i(t))^{b_i}w_{ij}(1 - x_j(t))}
\]
More generally, the opinion update of individual $i$ in the biased opinion formation process is defined as below.
\begin{definition}[Biased Opinion Formation Process]
Under the biased opinion formation process, the opinion of individual $i$ at time $t+1$, $x_i(t+1)$, is given by
\beq\label{def:biased-update}
x_i(t+1) = \frac{w_{ii}x_i(t) + (x_i(t))^{b_i}s_i(t)}{w_{ii} + (x_i(t))^{b_i}s_i(t) + (1-x_i(t))^{b_i}(d_i - s_i(t))}
\eeq
\end{definition}
where, as before, $s_i(t) := \sum_{j\in N(i)} w_{ij}x_j(t)$ is the weighted sum of the opinions of $i$'s neighbors,  and $d_i := \sum_{j\in N(i)} w_{ij}$ is $i$'s weighted degree.
Observe that when $b_i = 0$, \eqref{def:biased-update} is identical to \eqref{def:degroot}, \ie, DeGroot's averaging process is a special case of our process and corresponds to unbiased assimilation.
More generally, biased assimilation can be modeled by making $i$'s opinion update proportional to $\beta_i(x_i(t))s_i(t)$, where the bias function $\beta_i: [0,1]\rightarrow [0,1]$ is non-decreasing.

{\bf Connection with Urn Models.} Urn models are an elegant abstraction that have been used to analyze the properties of a wide variety of probabilistic processes. DeGroot's model of weighted averaging has the following analogous urn dynamic: $x_i(t)$ denotes the fraction of \ttr\ balls in individual $i$'s urn at time $t$, and $1-x_i(t)$ denotes the corresponding fraction of \ttb\ balls. At each time step, $i$ chooses a neighbor $j$ with probability proportional to $w_{ij}$ and chooses a ball uniformly at random from $j$'s urn. Individual $i$ adds that ball to his urn and discards a ball chosen uniformly  at random from his urn.
When the bias parameter $b_i = 1$, the biased opinion formation process can be interpreted as the following variant of the above urn dynamic: as before, $i$ chooses a neighbor $j$ with probability proportional to $w_{ij}$ and chooses a ball uniformly at random from $j$'s urn. In addition, $i$ also chooses a ball uniformly at random from his own urn. If the colors of the two balls match, $i$ puts them both into his urn and discards a ball chosen uniformly  at random from his urn. If the colors do not match, the two balls are returned to their respective urns.

\subsection{Biased Assimilation by a Single Agent in a Fixed Environment}
Here we demonstrate that our model of biased assimilation mathematically reproduces the empirical findings of Lord \etal \cite{ross-lepper}.
We analyze the change in opinion of a single individual as a function of his bias parameter when he is exposed to opinions from a \textit{fixed}  environment.
The fixed environment represents sources of information that influence the individual's opinion, but can be assumed to remain unaffected  by the individual's opinion, such as the news media, the Internet, the organizations that the individual is a part of, etc.

For this section, we will denote by $x(t)\in [0,1]$ the individual's opinion at time $t$, and by $b\ge0$ the individual's bias parameter. Let the individual's weight on his own opinion, $w_{ii} = w$. Let $s\in (0,1)$ denote the (time-invariant) weighted average of the opinions of all sources in the individual's environment. Then, from \eqref{def:biased-update}, the individual's opinion at time $t+1$ is given by
\beq\label{def:single-ind-update}
x(t+1) = \frac{wx(t) + (x(t))^bs}{w + (x(t))^bs + (1-x(t))^b(1-s)}
\eeq

Given $s\in (0,1)$, and $b\ne 1$, we define
\beq\label{def:polarization-threshold}
\hx(s,b) := \frac{s^{1/(1-b)}}{s^{1/(1-b)} + (1-s)^{1/(1-b)}}
\eeq
as the \textit{polarization threshold} for the individual. 
We show that when the individual is sufficiently biased (\ie, $b > 1$), the polarization threshold $\hx$ is an unstable equilibrium, \ie, in equilibrium the individual's opinion goes to 1 or 0 depending on whether the initial opinion was greater than or less than $\hx$.
On the other hand, when $b < 1$, $\hx$ is a stable equilibrium.
\begin{thm}\label{thm:single-agent}
Fix $t \ge 0$. Let $x(t) \in (0,1)$.
\begin{enumerate}
\item
If $b> 1$,
\begin{enumerate}
\item
 if $x(t) > \hx$, then $x(t+1) > x(t)$, and $x(t)\rightarrow 1$ as $t\rightarrow \infty$. 
 \item
 if $x(t) <\hx$, then $x(t+1) < x(t)$, and $x(t)\rightarrow 0$ as $t\rightarrow \infty$. 
 \item
 if $x(t) = \hx$, then for all $t'> t, x(t') = \hx$.
 \end{enumerate}
 \item
 If $b < 1$, 
 \begin{enumerate}
\item
 if $x(t) > \hx$, then $x(t+1) < x(t)$. 
 \item
 if $x(t) < \hx$, then $x(t+1) > x(t)$.
 \item
 $x(t)\rightarrow \hx$ as $t\rightarrow \infty$.
  \end{enumerate}
 
\end{enumerate}
\end{thm}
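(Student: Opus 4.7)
The plan is to reduce the scalar dynamics to the sign of a single auxiliary function $g(x)$, then combine one-step monotonicity with a ``no-crossing'' lemma to extract the convergence statements. Write the update as $x(t+1) = T(x(t))$ with $T$ given by \eqref{def:single-ind-update}. A direct simplification of $T(x) - x$ clears the fraction and factors $x(1-x)$ out of the numerator, yielding
\[
T(x) - x \;=\; \frac{x(1-x)}{D(x)}\, g(x), \qquad g(x) \;:=\; s\,x^{b-1} - (1-s)(1-x)^{b-1},
\]
where $D(x) := w + s x^b + (1-s)(1-x)^b > 0$ is the denominator in \eqref{def:single-ind-update}. Since $x(1-x)/D(x) > 0$ on $(0,1)$, the sign of $T(x) - x$ equals the sign of $g(x)$, and the fixed points of $T$ in $[0,1]$ are exactly $\{0, \hx, 1\}$ (the formula \eqref{def:polarization-threshold} is precisely the root $g(\hx) = 0$).

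Next I analyze $g$. When $b > 1$, $x^{b-1}$ is increasing in $x$ and $-(1-x)^{b-1}$ is increasing in $x$, so $g$ is strictly increasing and flips sign from negative to positive at $\hx$; when $b < 1$ the monotonicities reverse and $g$ is strictly decreasing. Plugging this into the sign of $T(x) - x$ immediately yields parts (1a), (1b), (2a) and (2b); the equilibrium statement $x(t') = \hx$ in (1c) is $T(\hx) = \hx$.

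The heart of the argument is a ``no-crossing'' lemma: $\operatorname{sign}(T(x) - \hx) = \operatorname{sign}(x - \hx)$ for every $x \in (0,1)$. The key trick is to use $g(\hx) = 0$ in the form $1 - s = s\,(\hx/(1-\hx))^{b-1}$ to eliminate $1 - s$ from the numerator of $T(x) - \hx$. After this substitution the numerator collapses, up to the positive factor $D(x)$, to
\[
w(x - \hx) \;+\; \frac{s}{(1-\hx)^{b-1}}\Bigl[\bigl(x(1-\hx)\bigr)^b - \bigl(\hx(1-x)\bigr)^b\Bigr].
\]
Since $u \mapsto u^b$ is strictly increasing on $[0,\infty)$ for $b > 0$, the bracketed difference has the same sign as $x(1-\hx) - \hx(1-x) = x - \hx$, and the two summands therefore agree in sign. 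This algebraic rearrangement is the main obstacle; it is clean but not obvious without the substitution for $1-s$ forced by the definition of $\hx$.

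Finally, convergence is a standard monotone-bounded argument. For $b > 1$ and $x(0) > \hx$: the first step gives $x(t+1) > x(t)$, and the no-crossing lemma gives $x(t+1) > \hx$, so $\{x(t)\}$ is increasing and bounded above by $1$; its limit must be a fixed point of $T$ in $(\hx, 1]$, forcing $x(t) \to 1$. The case $x(0) < \hx$ is symmetric and yields $0$. For $b < 1$ and $x(0) > \hx$: the first step now gives $x(t+1) < x(t)$ while the no-crossing lemma still gives $x(t+1) > \hx$, so $\{x(t)\}$ is decreasing and bounded below by $\hx$; its limit is a fixed point in $[\hx, x(0)] \subset [\hx, 1)$, hence equal to $\hx$. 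The case $x(0) < \hx$ is again symmetric.
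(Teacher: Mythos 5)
Your proposal is correct and complete. The one-step monotonicity part is essentially the paper's computation in different clothing: the paper works with the odds ratio $\frac{x(t+1)}{1-x(t+1)} = \frac{w + x^{b-1}s}{w+(1-x)^{b-1}(1-s)}\cdot\frac{x(t)}{1-x(t)}$ and compares the multiplier to $1$, which amounts to checking the sign of $s\,x^{b-1} - (1-s)(1-x)^{b-1}$ --- exactly your $g(x)$ extracted from the factorization $T(x)-x = \frac{x(1-x)}{D(x)}g(x)$. Where you genuinely diverge is in the convergence step, and your route is both cleaner and more complete. The paper proves convergence (for $b>1$ only) by a compactness-and-contradiction argument on the increment function, and merely asserts the $b<1$ lemmas ``by a similar argument'' without proof; but for $b<1$ one-step monotonicity alone does not rule out overshooting and oscillation around $\hx$, so some invariance statement is needed. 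Your no-crossing lemma, proved by eliminating $1-s$ via $g(\hx)=0$ and reading off the sign of $(x(1-\hx))^b - (\hx(1-x))^b$, supplies precisely this missing ingredient, after which ``monotone bounded sequence under a continuous map converges to a fixed point, and the fixed points are exactly $\{0,\hx,1\}$'' finishes all cases uniformly. The only nitpick is that you should say explicitly that $T$ is continuous on $[0,1]$ (true for all $b>0$) when passing to the limit; this is immediate but is the step that justifies ``the limit is a fixed point.''
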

The theorem is proved in Appendix~\ref{app:proofs-sec2}.
The opinion $x(t)$ can be interpreted as the individual's degree of support  for the extreme represented by 1. 
So, the above theorem shows that when the individual is sufficiently biased (\ie, $b > 1$), exposure to the environment pushes him away from the threshold $\hx$ (unless $x(0) = \hx$), and toward one of the extremes, and the individual holds an extreme opinion ($x(t) = 0$ or $x(t) = 1$) in equilibrium. Thus $\hx$ is an unstable equilibrium.
This mathematically captures the biased assimilation behavior observed empirically.
On the other hand, if the individual has low bias (\ie, $b < 1$), then he gravitates towards the polarization threshold $\hx$ over time. Thus, $\hx$ is a stable equilibrium in this case.
The behavior of the individual when $b = 1$ is a limiting case of the two cases proven in the theorem; as $b\rightarrow 1,\ \hx \rightarrow s$.
When the individual is connected to other individuals in a social network, we will show that the biased opinion formation process produces polarization even when $b= 1$.

\section{DeGroot's Repeated Averaging Process is not Polarizing}\label{sec:degroot}
It is easy to see that if DeGroot's process was asynchronous, \ie, individuals update their opinion one at a time, each opinion update can only lower the network disagreement index (NDI).
However, here we will show that each opinion update can only lower the NDI even when individuals update opinions simultaneously.
As a result, the repeated averaging process is depolarizing.
Our result holds for arbitrary weights $w_{ij}$, and an arbitrary vector of opinions $\bx\in [0,1]^n$, \ie, when the underlying network is arbitrarily homophilous.

\begin{thm}\label{thm:degroot-not-polarizing}
Consider an arbitrary weighted undirected graph $G = (V,E,w)$. Assume that $G$ is connected. Let $\bx(t)\in [0,1]^n$ be an arbitrary vector of opinions of nodes in $G$ at time $t\ge0$. Assume that for all $i\in V,\  b_i = 0$. Then, $\eta(G, \bx(t+1)) \le \eta(G, \bx(t))$, \ie, the network disagreement index at time $t+1$ is no more than that at time $t$.
\end{thm}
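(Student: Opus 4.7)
The plan is to recast both the update and the NDI in matrix form, and reduce the change in NDI to a manifestly non-negative quadratic form. Let $A$ be the symmetric matrix with $A_{ii} = w_{ii}$, $A_{ij} = w_{ij}$ for $(i,j)\in E$, and zero elsewhere; let $B$ be the diagonal matrix with $B_{ii} = w_{ii} + d_i$. Then \eqref{def:degroot} reads $\bx(t+1) = B^{-1} A\,\bx(t)$. The self-weights appear on the diagonals of both $A$ and $B$ and cancel in $L := B - A$, which is precisely the weighted graph Laplacian of $G$; consequently $\eta(G, \bx) = \bx^T L \bx$.

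Next, introduce $\Delta := \bx(t) - \bx(t+1) = (I - B^{-1}A)\bx(t) = B^{-1}L\,\bx(t)$, so that $L\bx(t) = B\Delta$. A short expansion yields
\[
\eta(G, \bx(t)) - \eta(G, \bx(t+1)) = 2\,\bx(t)^T L \Delta - \Delta^T L \Delta.
\]
Substituting $\bx(t)^T L \Delta = (L\bx(t))^T \Delta = \Delta^T B \Delta$ together with $L = B - A$ collapses the right-hand side to $\Delta^T (B + A) \Delta$.

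It remains to verify that $B + A$ is positive semidefinite. A direct expansion, after regrouping $\sum_i d_i\,\Delta_i^2 = \sum_{(i,j)\in E} w_{ij}(\Delta_i^2 + \Delta_j^2)$ with the off-diagonal cross terms, gives
\[
\Delta^T (B + A) \Delta = 2 \sum_{i \in V} w_{ii}\,\Delta_i^2 + \sum_{(i,j) \in E} w_{ij}\,(\Delta_i + \Delta_j)^2 \ge 0,
\]
since all weights are non-negative (the second term is exactly the signless Laplacian quadratic form of $G$). This finishes the proof.

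The principal trick is to express the drop in NDI as a quadratic form in $\Delta = B^{-1} L\,\bx(t)$ rather than in $\bx(t)$; only then does the symmetric part reorganize naturally into the signless Laplacian. A more direct attack that writes $y_i - y_j$ as a convex combination of neighbor-to-neighbor differences $x_k - x_l$ and applies Jensen's inequality edge-by-edge does yield an upper bound on $\eta(G, \bx(t+1))$, but this bound can already exceed $\eta(G, \bx(t))$ on a short path graph, and is therefore too lossy to close.
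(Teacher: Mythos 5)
Your proof is correct and follows essentially the same route as the paper's: both write the update in matrix form, express the change in NDI as a quadratic form via the weighted Laplacian $L = B - A$, and reduce the claim to the positive semidefiniteness of $B + A$ (equivalently, the paper's $2I - D^{1/2}L_GD^{1/2} \succeq 0$ with $D = B^{-1}$), verified by the same signless-Laplacian expansion $\sum_{(i,j)\in E} w_{ij}(\Delta_i+\Delta_j)^2$ plus nonnegative self-weight terms. Your choice to work directly with the increment $\Delta = B^{-1}L\bx(t)$ rather than the paper's $\by = D^{1/2}L_G\bx(t)$ is a purely cosmetic (and arguably cleaner) reparametrization of the identical argument.
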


The theorem is proved in Appendix~\ref{app:proofs-sec3}.
Observe that in the limit as $w_{ii}\rightarrow\infty$, individual $i$ can be viewed as being a \textit{zealot} \cite{ace-oz:zealots},\ie, an individual with an unchanging opinion.
So our result also holds for repeated averaging in the presence of zealots.

A possible criticism of this result is that it holds for this particular definition of the NDI which may not always capture the intuitive notion of polarization. 
For example, consider a network partitioned into two densely connected opposing factions with sparse cross linkages. One might consider such a network to be polarized, even though the network disagreement index for it is small.
An alternate measure that does capture the divergence of opinions in the above example is the \textit{global disagreement index} (GDI) defined below.
\begin{definition}[Global Disagreement Index (GDI)]
Given a vector of opinions $\bx\in [0,1]^n$ of individuals in $V$, the \emph{global disagreement index} $\g(\bx)$ is defined as
\beq\label{def:gamma}
\g(\bx) := \sum_{i < j} (x_i - x_j)^2
\eeq
\end{definition}
Observe that it is possible to assign edge weights $w_{ij}$ such that DeGroot's repeated averaging process increases the GDI since the latter is independent of the weights.
However, we show that a variant of repeated averaging, based on the well-known flocking model for decentralized consensus \cite{jnt-phdthesis}, can only decrease the GDI. 
We consider a repeated  averaging process where at each time step $t\ge0$, an arbitrary set $S(t)\subseteq V$ of individuals simultaneously updates their opinions to be closer to the average opinion of the set.
\begin{definition}[Flocking Process]
Let $\e \in [0,1]$. For $t\ge 0$, let $S(t)\subseteq V$ be an arbitrary set of individuals. Let $s(t) := \frac1{|S(t)|}\sum_{i\in S(t)} x_i(t)$ be the average opinions of individuals in $S(t)$. Under the flocking process, the opinion of individual $i\in V$ at time $t+1$, $x_i(t+1)$, is given by
\beq\label{def:flocking-update}
x_i(t+1) = \left\{
\begin{array}{rl}
(1-\e)x_i(t) + \e s(t),&\ \text{ if }i\in S(t)\\
x_i(t),&\ \text{ otherwise}
\end{array}
\right.
\eeq
\end{definition}
Next we show that each opinion update in the flocking process can only lower the GDI.
\begin{thm}\label{thm:flocking-not-polarizing}
Let $\bx(t)\in [0,1]^n$ be an arbitrary vector of opinions of nodes in $V$ at time $t\ge0$. Let $\bx(t+1)\in [0,1]^n$ be the vector of opinions at time $t+1$ after one step of the flocking process. Then, $\g(\bx(t+1)) \le \g(\bx(t))$, \ie, the GDI at time $t+1$ is no more than that at time $t$.
\end{thm}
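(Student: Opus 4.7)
The plan is to reduce GDI to total variance around the overall mean, show that the overall mean is preserved by one step of flocking, and then use convexity twice.

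\textbf{Step 1 (reduce to variance form).} The first move is to rewrite
\[
\g(\bx) \;=\; \sum_{i<j}(x_i-x_j)^2 \;=\; n\sum_{i\in V}(x_i-\bar{x})^2,
\]
where $\bar{x} = \frac{1}{n}\sum_i x_i$. This is a standard identity obtained by expanding the square and collecting: $\sum_{i<j}(x_i-x_j)^2 = n\sum_i x_i^2 - (\sum_i x_i)^2$. Thus comparing $\g(\bx(t+1))$ with $\g(\bx(t))$ is equivalent to comparing the total squared deviations from the common mean, provided the mean itself is unchanged.

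\textbf{Step 2 (mean is invariant).} Averaging the update rule \eqref{def:flocking-update} over $i\in S(t)$ gives
\[
\frac{1}{|S(t)|}\sum_{i\in S(t)} x_i(t+1) = (1-\e)s(t) + \e s(t) = s(t),
\]
so $\sum_{i\in S(t)} x_i(t+1)=\sum_{i\in S(t)} x_i(t)$. Coordinates outside $S(t)$ are untouched, hence $\bar{x}(t+1)=\bar{x}(t)=:\bar{x}$.

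\textbf{Step 3 (contraction via convexity).} For $i\notin S(t)$ the term $(x_i-\bar{x})^2$ is unchanged. For $i\in S(t)$, write the update as a convex combination,
\[
x_i(t+1)-\bar{x} = (1-\e)\bigl(x_i(t)-\bar{x}\bigr) + \e\bigl(s(t)-\bar{x}\bigr),
\]
and apply convexity of $z\mapsto z^2$ (since $\e\in[0,1]$) to obtain
\[
(x_i(t+1)-\bar{x})^2 \;\le\; (1-\e)(x_i(t)-\bar{x})^2 + \e(s(t)-\bar{x})^2.
\]
Summing over $i\in S(t)$, the desired inequality $\sum_{i\in S(t)}(x_i(t+1)-\bar{x})^2\le\sum_{i\in S(t)}(x_i(t)-\bar{x})^2$ reduces to
\[
|S(t)|\bigl(s(t)-\bar{x}\bigr)^2 \;\le\; \sum_{i\in S(t)}\bigl(x_i(t)-\bar{x}\bigr)^2,
\]
which is just Jensen's inequality applied to $s(t)=\tfrac{1}{|S(t)|}\sum_{i\in S(t)} x_i(t)$. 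Combining with the unchanged contributions from $V\setminus S(t)$ and multiplying by $n$ yields $\g(\bx(t+1))\le\g(\bx(t))$.

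\textbf{Main obstacle.} There is essentially no hard step; the one nonobvious move is recognizing the variance identity in Step~1, which converts a pairwise sum into a sum of independent terms and thereby decouples the updated coordinates from the static ones. After that, the only ingredient is the convexity of the square, used once inside each updated coordinate and once to control the group average $s(t)$ relative to the global average $\bar{x}$.
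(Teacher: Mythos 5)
Your proof is correct, but it takes a genuinely different route from the paper's. The paper writes the update in matrix form as $\bx(t+1) = (1-\e)\bx(t) + \e A(t)\bx(t)$ with $A(t)$ doubly stochastic, uses convexity of $\g$ to split off the $\e\,\g(A(t)\bx(t))$ term, and then invokes the Hardy--Littlewood--P\'olya majorization theorem together with Schur-convexity of $\g$ to conclude $\g(A(t)\bx(t)) \le \g(\bx(t))$. You instead exploit the identity $\g(\bx) = n\sum_i (x_i - \bar{x})^2$, observe that the mean is invariant under one flocking step, and then need only two elementary applications of convexity of $z \mapsto z^2$ (one per updated coordinate, one Jensen step controlling $s(t)$ against $\bar{x}$). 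Your argument is more elementary and self-contained --- no majorization machinery --- but it is tied to the quadratic form of the GDI, since the variance identity has no analogue for a general convex penalty. The paper's majorization route is what licenses the later claim in Section~\ref{sec:polarization-metrics} that the flocking step is depolarizing for \emph{any} symmetric convex measure of divergence, e.g.\ $\sum_{i<j} h(|x_i - x_j|)$ for convex $h$; that generality is exactly what your approach gives up in exchange for simplicity.
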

The theorem is proved in Appendix~\ref{app:proofs-sec3}.

\section{Polarization due to Biased Assimilation}\label{sec:homophily}
In this section we state and prove our main result: in a simple model of networks with homophily, the biased opinion formation process may result in either polarization, persistent disagreement, or consensus depending on how biased the individuals are.
We model homophilous networks using a deterministic variant of \emph{multi-type random networks} \cite{golub-jackson:homophily}. 
Multi-type random networks are a generalization of Erd\"{o}s-R\'{e}nyi random graphs. Nodes in $V$ are partitioned into \emph{types}, say, $\t_1, \t_2,\dotsc, \t_k$. 
The network is parameterized by a vector $(n_1, \dotsc, n_k)$ where $n_i$ is the number of nodes of type $\t_i$, and a symmetric matrix $P \in [0,1]^{k\times k}$, where $P_{ij}$ is the probability that there exists an undirected edge between a node of type $\t_i$ and another of type $\t_j$. 
The class of multi-type random networks where $P_{ii} > P_{ij}$ for all $i,j$, is called is the \text{islands model}, and is used to model homophily (since an individual is more likely to be connected with individuals of the same type).
We will analyze the biased opinion formation process over a deterministic variant of the islands model, which we call a \emph{two-island network}.
\begin{mydef}
Given integers $n_1, n_2 \ge 0$, and real numbers $p_s, p_d \in (0,1)$, a $(n_1, n_2, p_s, p_d)$-two island network is a weighted undirected graph $G = (V_1, V_2, E, w)$ where
\begin{itemize}
\item
$|V_1| = n_1, |V_2| = n_2$ and $V_1 \cap V_2 = \emptyset$.
\item
Each node $i\in V_1$ has $n_1p_s$ neighbors in $V_1$ and $n_2p_d$ neighbors in $V_2$.
\item
Each node $i\in V_2$ has $n_2p_s$ neighbors in $V_2$ and $n_1p_d$ neighbors in $V_1$\footnote{For clarity of exposition, we assume that the quantities $n_1p_s, n_2p_s, n_1p_d$ and $n_2p_d$ are all integers.}.
\item
$p_s > p_d$.
\end{itemize}
\end{mydef}
For a two-island network, we define the \emph{degree of homophily} as follows.
\begin{mydef}
Let $G = (V_1, V_2,E, w)$ be a $(n_1, n_2, p_s, p_d)$-two island network. Then the degree of homophily in $G$, $h_G$, is defined to be the ratio $p_s/p_d$.
\end{mydef}

Informally, a high value of $h_G$ implies that nodes in $V$ are much more likely to form edges to other nodes of their own type, thereby exhibiting a high degree of homophily.
\begin{thm}\label{thm:homophily-polarization}
Let $G = (V_1, V_2, E, w)$ be a $(n,n,p_s,p_d)$-two island network. For all $i\in V = V_1\cup V_2$, let $w_{ii} = 0$. For all $(i,j)\in E$, let $w_{ij} = 1$.  Assume for all $i\in V_1$, $x_i(0) = x_0$ where $\frac12 < x_0 < 1$. Assume for all $i\in V_2,\ x_i(0) = 1-x_0$. Assume for all $i\in V$, the bias parameter $b_i = b > 0$. Then, 
\begin{enumerate}
\item
(Polarization) If $b\ge 1,\ \forall i\in V_1,\ \lim_{t\rightarrow \infty} x_i(t) =  1$, and $\forall i\in V_2,\ \lim_{t\rightarrow \infty} x_i(t) =  0$.
\item
(Persistent Disagreement) if $1 > b \ge \frac2{h_G+1}$, then there exists a unique $\hx\in (\frac12,1)$ such that $\forall i\in V_1,\ \lim_{t\rightarrow \infty} x_i(t) =  \hx$, and $\forall i\in V_2,\ \lim_{t\rightarrow \infty} x_i(t) =  1-\hx$.
\item
(Consensus) if $b <\frac2{h_G+1}$, then for all $i\in V,\ \lim_{t\rightarrow \infty} x_i(t) =  \frac12$.
\end{enumerate}
\end{thm}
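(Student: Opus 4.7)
The plan is to exploit the bilateral symmetry of the two-island network together with the symmetric initial data to collapse the $2n$-dimensional dynamics to a scalar iteration $y(t+1) = T(y(t))$, and then to determine the fixed points of $T$ and their basins via a single logarithmic-derivative computation. A straightforward induction on $t$ shows that $x_i(t) = y(t)$ for all $i\in V_1$ and $x_j(t) = 1-y(t)$ for all $j\in V_2$: every $i\in V_1$ sees the common weighted neighbor sum $s_i(t) = np_s\,y(t) + np_d(1-y(t))$ and the common degree $d_i = n(p_s+p_d)$, so \eqref{def:biased-update} returns the same right-hand side. Setting $\alpha := p_s/(p_s+p_d) = h_G/(h_G+1)\in(\tfrac12,1)$ and $S(y) := \alpha y + (1-\alpha)(1-y)$, division by $d_i$ yields the scalar map
\[
T(y) = \frac{y^{b}\,S(y)}{y^{b}\,S(y) + (1-y)^{b}\bigl(1-S(y)\bigr)}, \qquad T(1-y) = 1 - T(y).
\]

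Next, $T$ is strictly increasing on $(0,1)$ because $1/T(y) - 1 = \bigl((1-y)/y\bigr)^{b}\,(1-S(y))/S(y)$ is a product of two strictly decreasing functions (using $\alpha > 1/2$), so every orbit is monotone and converges to a fixed point. For $y\in(0,1)$ the equation $T(y)=y$ is equivalent to $\psi(y)=1$, where $\psi(y) := \bigl(y/(1-y)\bigr)^{b-1}\cdot S(y)/(1-S(y))$, and $\mathrm{sgn}(T(y)-y) = \mathrm{sgn}(\log\psi(y))$. The function $\log\psi$ is antisymmetric about $1/2$ with $\log\psi(1/2)=0$, and has boundary limits $\log\psi(0^+) = +\infty$, $\log\psi(1^-) = -\infty$ when $b<1$. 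The identity
\[
S(y)\bigl(1-S(y)\bigr) = \alpha(1-\alpha) + (2\alpha-1)^{2}\, y(1-y)
\]
produces the key simplification
\[
\frac{d}{dy}\log\psi(y) = \frac{(b-1)\alpha(1-\alpha) + y(1-y)(2\alpha-1)\bigl[(b-1)(2\alpha-1)+1\bigr]}{y(1-y)\,S(y)(1-S(y))},
\]
so the sign of $(\log\psi)'$ is controlled by an affine function of $y(1-y)$.

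The three regimes then follow from a case split. (i) For $b\geq 1$ both summands in the numerator are nonnegative, and their sum is positive on $(0,1)$, so $\log\psi$ is strictly increasing with unique zero $1/2$; hence $T(y)>y$ on $(1/2,1)$ and the increasing orbit from $x_0$ converges to the only fixed point above $x_0$, namely $1$. (ii) For $2/(h_G+1) < b < 1$ the constant term in the numerator is negative while the coefficient of $y(1-y)$ is positive (using $(1-b)(2\alpha-1) < 1$), so the numerator vanishes exactly at $y(1-y) = c_\star$ for some $c_\star\in(0,1/4)$; this gives two symmetric critical points $y_\pm = 1/2 \pm \sqrt{1/4-c_\star}$ where $(\log\psi)'$ flips sign. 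Combined with the boundary values and antisymmetry, $\log\psi$ has exactly one zero $\hat x$ in $(y_+,1)\subset(1/2,1)$, is positive on $(1/2,\hat x)$ and negative on $(\hat x, 1)$, so the monotone orbit from $x_0$ is trapped in one of the invariant subintervals $(1/2,\hat x]$ or $[\hat x, 1)$ and converges to $\hat x$. (iii) For $b \leq 2/(h_G+1)$ one checks $c_\star \geq 1/4$, so the numerator is non-positive on $(0,1)$ and $\log\psi$ is strictly decreasing with unique zero at $1/2$; thus $T(y)<y$ on $(1/2,1)$ and the orbit from $x_0$ decreases monotonically to $1/2$. Combined with the $V_2$ symmetry $x_j(t) = 1-y(t)$, these three regimes give the three claimed limits.

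The principal obstacle is the sign and critical-point analysis of $(d/dy)\log\psi$ in regime (ii), especially the uniqueness of $\hat x$ in $(1/2,1)$. The algebraic identity for $S(1-S)$ is precisely what renders the numerator of $(\log\psi)'$ affine in $y(1-y)$, so its zeros can be located explicitly and the sharp transition at $b = 2/(h_G+1)$ read off directly; without this closed-form reduction, counting interior fixed points and pinpointing the threshold would be considerably more delicate.
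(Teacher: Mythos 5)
Your proof is correct and reaches the same three regimes, but the technical core is genuinely different from the paper's. The paper also collapses the dynamics by symmetry (its Lemmas~\ref{lem:identical-expectation} and~\ref{lem:comp-expectation} are your induction), but it then compares the degree of homophily $h_G$ against an ad hoc rational function $f(y;b) = \frac{y^{2-b}-(1-y)^{2-b}}{y(1-y)^{1-b}-y^{1-b}(1-y)}$, proving continuity via L'H\^opital at $y=\tfrac12$ and monotonicity on $[\tfrac12,1]$ via a chain of algebraic inequalities, and it establishes convergence by a separate boundedness-plus-monotonicity induction followed by an $\e$-contradiction argument (carried out explicitly only for the persistent-disagreement case, with the other two regimes waved through as ``limiting cases''). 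Your route replaces all of this with two observations: $T$ is strictly increasing because $1/T-1$ factors into two decreasing functions, so every orbit is monotone and converges to a fixed point of $T$ preserved in the invariant interval $(\tfrac12,1)$; and the fixed-point structure is read off from the sign of $(\log\psi)'$, whose numerator the identity $S(1-S)=\alpha(1-\alpha)+(2\alpha-1)^2y(1-y)$ renders affine in $y(1-y)$, so that the threshold $b=\tfrac{2}{h_G+1}$ appears exactly as the value at which $(\log\psi)'(\tfrac12)$ changes sign. This buys a uniform treatment of all three regimes, an explicit location of the interior fixed point between $y_+$ and $1$, and a much shorter monotonicity argument; the paper's approach, by contrast, needs no calculus beyond one L'H\^opital evaluation and keeps the comparison quantity $f(y;b)$ directly interpretable as the homophily level at which $y$ would be stationary. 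One boundary point deserves a remark: at $b=\tfrac2{h_G+1}$ exactly you place the system in the consensus regime (limit $\tfrac12$), whereas the theorem statement asserts a unique $\hx\in(\tfrac12,1)$. Your resolution is the correct one --- at that value $c_\star=\tfrac14$, $\log\psi$ is still strictly decreasing, and $\tfrac12$ is the only fixed point in $[\tfrac12,1)$ --- and indeed the paper's own intermediate value argument only produces $\hat y\in[\tfrac12,1)$ with $f(\tfrac12;b)=h_G$ at that boundary, so the open-interval claim in the theorem (and in its Lemma~\ref{lem:x(t+1)x(t)}(b)) is a slip that your case split silently repairs rather than a gap in your argument.
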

The theorem is proved in Appendix~\ref{app:proofs-sec4}.
Let us analyze the implications of this theorem. Let $\eta(G,\bx(t))\rightarrow \eta_{\infty}$ as $t\rightarrow\infty$, \ie, let $\eta_{\infty}$ be the NDI at equilibrium. Then, the above result implies that when $b\ge 1$, $\eta_{\infty} > \eta(G,\bx(0))$, \ie, the biased opinion formation process is polarizing. On the other hand, when individuals are moderately biased (\ie, $1 > b\ge 2/(h_G+1)$), $\eta_{\infty} > \eta(G,\bx(0))$ if and only if $x_0 < \hx$; so the opinion formation process may not be polarizing, but it doesn't produce consensus either. Finally, when individuals have low bias (\ie, $b < 2/(h_G+1)$, $\eta_{\infty} = 0 < \eta(G,\bx(0))$, \ie, the opinion formation process is depolarizing, since the network reaches consensus in equilibrium.

This illustrates the importance of the bias parameter in causing polarization.
Also, observe that $b=1$ corresponds to the urn dynamic described in Section~\ref{subsec:biased-model}, and hence the above result shows that that urn dynamic causes polarization for arbitrarily small degree of homophily.

\section{Recommender Systems and Polarization}
Recommender systems are widely used on the Internet to present personalized information (\eg, search results, new articles, products) to individuals. 
This personalization is typically done by algorithms that use an individual's the past behavior (\eg, history of browsing and purchases) and of other individuals that are similar in some way to that individual, to discover items of possible interest to the user.
It has been argued \cite{sunstein:republic.com} that this personalization of information has an echo-chamber effect where individuals are only exposed to information they agree with, and this ultimately leads to increased polarization.
In this section we investigate this question: do recommender systems have a polarizing effect?
We analyze three simple random-walk based recommender algorithms---SimpleSALSA (Algorithm~\ref{rec:salsa}), SimplePPR (Algorithm~\ref{rec:PPR}) and SimpleICF(Algorithm~\ref{rec:ICF})--- that are similar in spirit to three well-known recommender algorithms from the literature: SALSA \cite{salsa}, Personalized PageRank \cite{pagerank}, and item-based collaborative filtering \cite{item-cf}, respectively.

\begin{algorithm}[t]
\begin{algorithmic}
\STATE {\bf Input:} $G = (V_1, V_2, E)$, node $i\in V_1$.
\STATE Perform a three-step random walk on $G$ starting at $i$.
\STATE Let the random walk end at node $j\in V_2$.
\STATE {\bf Output:} $j$.
\end{algorithmic}
\caption{SimpleSALSA}
\label{rec:salsa}
\end{algorithm}

We consider the following simple model: 
Let $G = (V_1, V_2,E)$ be an unweighted undirected bipartite graph.  
Nodes in $V_1$ represent individuals. 
Nodes in $V_2$ represent items.
The items could be books, webpages, news articles, products, etc. 
For concreteness, we will refer to nodes in $V_2$ as books.
For a node $i\in V_1$ and a node $j\in V_2$, an edge $(i,j)\in E$ represents ownership, \ie, individual $i$ owns book $j$. 
For our purpose, we define a recommender algorithm as below.
\begin{definition}
A recommender algorithm takes as input a bipartite graph $G = (V_1, V_2, E)$ and a node $i\in V_1$, and outputs a node $j\in V_2$.
\end{definition}
Thus, given a graph representing which users own which books, and a specific user $i$, a recommender algorithm outputs a single book $j$ to be recommended to $i$.
We assume that $i$ can only buy a book if it is recommended to him.
However, he may choose to reject a recommendation, \ie, to not buy a recommended book.
Therefore, $i$ buying a book $j$ requires two steps: the recommender algorithm must recommend $j$ to $i$, and then $i$ must accept the recommendation.

Since, we are interested in analyzing the polarizing effects of recommender systems, we will assume that each book in $V_2$ is labeled either `\ttr' or `\ttb'.
These labels are purely for the purpose of analysis; the algorithms we study are agnostic to  these labels.
For each individual $i\in V_1$, let $x_i\in [0,1]$ be the fraction of \ttr\ books owned by $i$, and $1-x_i$ be that of \ttb\ books.
Individuals may be biased, or unbiased, as we define below.

 \begin{definition}
 Consider a book recommended to an individual $i\in V_1$. We say that $i$ is \textit{unbiased} if $i$ accepts the recommendation with the same probability independent of whether the book is \ttr\ or \ttb. We say that $i$ is \textit{biased} if 
 \begin{enumerate}
 \item
 $i$ accepts the recommendation of a \ttr\ book with probability $x_i$, and rejects it with probability $1-x_i$, and 
 \item
 $i$ accepts the recommendation of a \ttb\ book with probability $1-x_i$, and rejects it with probability $x_i$.
 \end{enumerate}
 \end{definition}
Observe that the above definition of an individual $i$ being biased corresponds to the urn dynamic described in Section~\ref{subsec:biased-model} with $b_i = 1$.
For an individual $i$, the fraction of \ttr\ books $i$ owns, $x_i$, can be viewed as $i$'s opinion in the interval $[0,1]$, and so a recommender algorithm can be viewed as an opinion formation process. 
The opinion $x_i$ remains unchanged if $i$ rejects a recommendation. 
However, if $i$ accepts a recommendation, $x_i$ increases or decreases depending on whether the recommended book was \ttr\ or \ttb. 
Thus, we are interested in the probability that a recommendation was for a \ttr\ (or \ttb) book \textit{given} that $i$ accepted the recommendation.
The above probability determines whether a recommender algorithm is polarizing or not.

\begin{definition}
Consider a recommender algorithm and an individual $i\in V_1$ that accepts the algorithm's recommendation. The algorithm is \textit{polarizing} with respect to $i$ if 
\begin{enumerate}
\item
when $x_i > \frac12$,  the probability that the recommended book was \ttr\ is greater than $x_i$, and
\item
when $x_i < \frac12$,  the probability that the recommended book was \ttr\ is less than $x_i$.
\end{enumerate}
\end{definition}

In order to analyze the recommender algorithms, we assume a generative model for $G$, which we describe next.

 \begin{algorithm}[t]
\begin{algorithmic}
\STATE {\bf Input:} $G = (V_1, V_2, E)$, node $i\in V_1$.
\STATE {\bf Parameter:} A large positive integer $T$.
\STATE Perform $T$ three-step random walks on $G$ starting at node $i$.
\STATE For node $j\in V_2$, let \texttt{count(j)} be the number of random walks that end at node $j$.
\STATE {\bf Output:} $j^* := \arg\max_j \texttt{count(j)}$.
\end{algorithmic}
\caption{SimplePPR}
\label{rec:PPR}
\end{algorithm}

\subsection{Generative Model for $G$}
Let the number of individuals, $|V_1| = m > 0$.
Let the number of books, $|V_2| = 2n$, with $n > 0$ books of each color.
We assume that $m = f(n)$; and $\lim_{n\rightarrow\infty} f(n) = \infty$. 
For each individual $i\in V_1$, we draw $x_i$ independently from a distribution over $[0,1]$ with a probability density function (pdf) $g(\cdot)$.
We assume that $g$ is symmetric about $\frac12$, \ie, for all $y\in [0,1],\ g(y) = g(1-y)$. 
This implies that for all $i\in V_1,\ \expect[x_i] = \frac12$.
We assume that the variance of the distribution is strictly positive, \ie, $\var(x_i) > 0$.
For an individual $i$ and a \ttr\ book $j$, there exists an edge $(i,j)\in E$ independently with probability $\frac{x_ik}{n}$, where $0 < k < n$.
For an individual $i$ and a \ttb\ book $j$, there exists an edge $(i,j)\in E$ independently with probability $\frac{(1-x_i)k}{n}$.
So, in expectation, each individual $i$ owns $k$ books, and $x_i$ fraction of them are \ttr.

For two books $j,j'\in V_2$, let $M_{jj'} := |N(j) \cap N(j')|$ be the number of individuals in $V_1$ that are neighbors of both $j$ and $j'$ in $G$. 
For any two nodes $i,j\in V$, let $\prob[i\xrightarrow{\ell} j]$ be the probability that a $\ell$-step random walk over $G$ starting at $i$ ends at $j$.
For a node $i\in V_1$ and a node $j\in V_2$, let $Z_{ij}$ be the indicator variable for edge $(i,j)$,\ie, $Z_{ij} = 1$ if $(i,j)\in E$, and $Z_{ij} = 0$ otherwise.

\subsection{Analysis}
Next we prove our results about the polarizing effects of each of the three algorithms. Our results hold with probability 1 in the limit as $n\rightarrow\infty$.
First we invoke the Strong Law of Large Numbers to show that the random quantities we care about all take their expected values with probability 1 as $n\rightarrow \infty$.
\begin{lem}\label{lem:ssln}
In the limit as $n\rightarrow\infty$, with probability 1,
\begin{enumerate}[(a)]
\item
for all $i\in V_1,\ |N(i)| \rightarrow k$,
\item 
for all $i\in V_1,\ \sum_{\substack{j_1\in V_2\\j_1 \text{ is \ttr}}} Z_{ij_1} \rightarrow x_ik$,
\item  
for all $i\in V_1,\ \sum_{\substack{j_1\in V_2\\j_2 \text{ is \ttb}}} Z_{ij_2} \rightarrow (1-x_i)k$,
\item
for all $j\in V_2,\ |N(j)| \rightarrow \frac{mk}{2n}$,
\item
for every pair of \ttr\ books $j,j'\in V_2, M_{jj'} = \sum_{i\in V_1} Z_{ij}Z_{ij'} \rightarrow \frac{mk^2(\frac14 + \var(x_1))}{n^2}$,
\item
for every pair of \ttb\ books $j,j'\in V_2, M_{jj'} = \sum_{i\in V_1} Z_{ij}Z_{ij'} \rightarrow \frac{mk^2(\frac14 + \var(x_1))}{n^2}$, and 
\item
for every \ttr\ book $j$ and every \ttb\ book $j',\ M_{jj'} = \sum_{i\in V_1} Z_{ij}Z_{ij'} \rightarrow \frac{mk^2(\frac14 - \var(x_1))}{n^2}$.
\end{enumerate}
\end{lem}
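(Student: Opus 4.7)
The plan is to recognize each of the seven quantities as a sum of (conditionally) independent indicators with easily computed expectation, and to combine Chernoff/Hoeffding-type tail bounds with a polynomial union bound over indices. Throughout it will be convenient to condition on $\{x_i\}_{i\in V_1}$, using that the edge indicators $\{Z_{ij}\}$ are mutually independent given the opinions.

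Parts (a)--(c) are arguments on the individual side. For fixed $i\in V_1$, conditional on $x_i$ the number of \ttr\ neighbors of $i$ is $\text{Binomial}(n,x_ik/n)$ with mean $x_ik$, and the number of \ttb\ neighbors is $\text{Binomial}(n,(1-x_i)k/n)$ with mean $(1-x_i)k$. A Chernoff bound gives exponential concentration around these means, and a union bound over the $m=f(n)$ individuals yields (b) and (c) simultaneously; (a) follows by summing. Part (d) is the analogous statement on the book side: for a fixed \ttr\ book $j$, $|N(j)|=\sum_{i\in V_1}Z_{ij}$ is a sum of $m$ independent Bernoullis whose marginal mean is $\expect[x_i]k/n=k/(2n)$ (by the symmetry $g(y)=g(1-y)$), so the total mean is $mk/(2n)$; apply Bernstein's inequality and union-bound over the $2n$ books.

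The core of the proof is (e)--(g). Fix two books $j,j'\in V_2$. Since the $x_i$'s are iid and $Z_{ij},Z_{ij'}$ are conditionally independent given $x_i$, the variables $\{Z_{ij}Z_{ij'}\}_{i\in V_1}$ are iid, with one-dimensional mean determined by $g$. For two \ttr\ books,
\[
\expect[Z_{ij}Z_{ij'}]=\expect\bigl[x_i^2 k^2/n^2\bigr]=(k^2/n^2)\bigl(\tfrac14+\var(x_1)\bigr),
\]
using $\expect[x_1^2]=(\expect x_1)^2+\var(x_1)=1/4+\var(x_1)$. For two \ttb\ books the symmetry $g(y)=g(1-y)$ gives the same value. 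For one \ttr\ and one \ttb\ book,
\[
\expect[Z_{ij}Z_{ij'}]=(k^2/n^2)\expect[x_i(1-x_i)]=(k^2/n^2)\bigl(\tfrac14-\var(x_1)\bigr).
\]
Multiplying by $m$ yields the three claimed limits, and Hoeffding's inequality on the bounded iid sequence $\{Z_{ij}Z_{ij'}\}_{i=1}^m$ gives exponential concentration of $M_{jj'}$ around its mean as $m=f(n)\to\infty$.

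The one real technical point is the union bound in (e)--(g): there are $\binom{2n}{2}=O(n^2)$ pairs of books, so the pointwise failure probability must decay faster than $1/n^2$ to invoke Borel--Cantelli and conclude that the convergence holds simultaneously for all pairs with probability $1$. Hoeffding/Bernstein on bounded independent indicators delivers exponential tails in $m$, which easily beats the polynomial union bound provided $m=f(n)$ grows sufficiently fast (implicitly covered by $f(n)\to\infty$, and likewise for (a)--(d) in the regime where $k$ grows with $n$). Combining these pointwise estimates and taking a single union bound over $i\in V_1$, over $j\in V_2$, and over pairs $\{j,j'\}\subset V_2$ then establishes all seven statements simultaneously with probability $1$.
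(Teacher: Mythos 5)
Your proposal is correct and is in fact considerably more careful than what the paper actually writes: the paper's entire proof is the single sentence that $m=f(n)\to\infty$ and so (a)--(g) ``follow from the Strong Law of Large Numbers.'' Your expectation computations are exactly the ones the argument needs --- $\expect[Z_{ij}Z_{ij'}]=(k^2/n^2)\expect[x_1^2]=(k^2/n^2)(\tfrac14+\var(x_1))$ for two \ttr\ books, $(k^2/n^2)\expect[x_1(1-x_1)]=(k^2/n^2)(\tfrac14-\var(x_1))$ for a mixed pair, and the symmetry $g(y)=g(1-y)$ for the \ttb--\ttb\ case and for $\expect[|N(j)|]=mk/2n$ --- and they coincide with the calculations the authors use downstream. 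Where you genuinely depart from (and improve on) the paper is in recognizing that a bare SLLN does not deliver the ``for all $i$,'' ``for all $j$,'' and ``for all pairs $j,j'$'' quantifiers over index sets that grow with $n$: these are triangular arrays, not a single fixed i.i.d.\ sequence, so one needs exponential tail bounds plus a polynomial union bound and Borel--Cantelli, which is precisely your route. One caveat applies to both you and the paper: for the limits to hold one needs the relevant means to diverge --- $k\to\infty$ for (a)--(c), $mk/n\to\infty$ for (d), and $mk^2/n^2\to\infty$ (indeed faster than $\log n$, to beat the $O(n^2)$ union bound) for (e)--(g); the hypothesis $f(n)\to\infty$ alone does not guarantee this. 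You flag this growth condition explicitly, whereas the published proof leaves it implicit, so your write-up is the more honest of the two; stating the required rates as explicit hypotheses would close the only remaining gap.
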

\begin{proof}
Recall that as $n\rightarrow \infty$, $m = f(n)\rightarrow\infty$. So statements (a) through (g) follow from the Strong Law of Large Numbers.
\end{proof}

\begin{algorithm}[t]
\begin{algorithmic}
\STATE {\bf Input:} $G = (V_1, V_2, E)$, node $i\in V_1$.
\STATE {\bf Parameter:} A large positive integer $T$.
\STATE Choose a neighbor $k$ of $i$ uniformly at random.
\STATE Perform $T$ two-step random walks on $G$ starting at $k$.
\STATE For node $j\in V_2$, let \texttt{count(j)} be the number of random walks that end at node $j$.
\STATE {\bf Output:} $j^* := \arg\max_j \texttt{count(j)}$.
\end{algorithmic}
\caption{SimpleICF}
\label{rec:ICF}
\end{algorithm}
We use Lemma~\ref{lem:ssln} to prove our results.
First we show that SimplePPR (Algorithm~\ref{rec:PPR}) is polarizing with respect to $i$ even if  $i$ is unbiased.
\begin{thm}\label{thm:ppr-n->infty}
In the limit as $n\rightarrow\infty$ and as $T\rightarrow\infty$, SimplePPR  is polarizing with respect to $i$.
\end{thm}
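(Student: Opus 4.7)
The plan is to compute the limiting distribution (as $n\to\infty$) of the endpoint $j$ of a three-step random walk from $i$, show that each individual \ttr book receives strictly more probability mass than each individual \ttb book when $x_i>\tfrac12$, and then invoke the Strong Law of Large Numbers in $T$ to conclude that the $\arg\max$ of the walk counts is a \ttr book with probability $1$ in the combined limit.

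For the first step, I would expand, for any $j\in V_2$,
\[
\prob[i\xrightarrow{3} j] \;=\; \sum_{j_1\in N(i)} \frac{1}{|N(i)|\,|N(j_1)|}\sum_{i'\in N(j_1)\cap N(j)} \frac{1}{|N(i')|}.
\]
By Lemma~\ref{lem:ssln}(a),(d), as $n\to\infty$ we have $|N(i)|,|N(i')|\to k$ and $|N(j_1)|\to mk/(2n)$ almost surely, so the inner sum collapses to $M_{j_1 j}/k$ and the whole expression reduces to $\tfrac{2n}{mk^3}\sum_{j_1\in N(i)} M_{j_1 j}$. Parts~(b)--(c) of the lemma split the outer sum into $x_i k$ \ttr neighbors and $(1-x_i)k$ \ttb neighbors of $i$, while parts~(e)--(g) give the limiting values of $M_{j_1 j}$ depending on whether $j_1$ and $j$ share a color. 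Substituting and collecting terms gives, after a routine algebraic simplification,
\[
\prob[i\xrightarrow{3} j] \;\longrightarrow\; \frac{2}{n}\Bigl[\tfrac14 + (2x_i-1)\,\var(x_1)\Bigr]\quad\text{for }j\text{ \ttr},
\]
and the same expression with the sign of $(2x_i-1)\var(x_1)$ reversed for $j$ \ttb.

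Next, I would compare these limits and pass to $\arg\max$. Since $\var(x_1)>0$ is explicitly assumed, the factor $(2x_i-1)\var(x_1)$ is strictly positive whenever $x_i>\tfrac12$, so every \ttr book has a strictly larger limiting walk probability than every \ttb book (and the inequality flips when $x_i<\tfrac12$). Applying the SLLN to the $T$ i.i.d.\ walks yields $\texttt{count}(j)/T \to \prob[i\xrightarrow{3} j]$ almost surely as $T\to\infty$ for each $j\in V_2$; since $V_2$ is finite, this suffices to conclude that eventually $\arg\max_j \texttt{count}(j)$ lies in the argmax set of the limit probabilities, which (after the $n$-limit) consists only of \ttr books when $x_i>\tfrac12$. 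Because $i$ is unbiased, the probability of acceptance is independent of color, so the probability that the recommended (and accepted) book is \ttr equals $\prob[j^\ast\text{ is \ttr}]=1$, which strictly exceeds $x_i$ almost surely (note $\prob[x_i\in\{0,1\}]=0$ since $g$ is a density on $[0,1]$); the symmetric argument disposes of $x_i<\tfrac12$, giving the polarization condition.

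The main obstacle I anticipate is controlling the order of limits. For finite $n$, the $n$ different \ttr books have walk probabilities that differ from one another by lower-order fluctuations in the empirical degrees and co-ownership counts, and one must check that the $\Theta(1/n)$ limit-gap between the \ttr and \ttb values dominates these within-color fluctuations before one can identify the argmax with a specific color. Sending $n\to\infty$ first (as Lemma~\ref{lem:ssln} naturally requires) and only then $T\to\infty$ sidesteps this cleanly: after the $n$-limit, every \ttr book shares the same strictly larger limit probability, and the $T$-limit collapses to a standard SLLN statement on a finite state space.
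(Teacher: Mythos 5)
Your proposal is correct and follows essentially the same route as the paper's proof: decompose the three-step walk through the first neighbor and the two-step book-to-book walk, apply Lemma~\ref{lem:ssln} to get the limiting walk probabilities (your $\frac{2}{n}\bigl[\frac14+(2x_i-1)\var(x_1)\bigr]$ is an algebraic simplification of the paper's $\frac{x_i}{n}(\frac12+2\var(x_1))+\frac{1-x_i}{n}(\frac12-2\var(x_1))$), conclude every \ttr\ book strictly dominates every \ttb\ book when $x_i>\frac12$, and use the SLLN in $T$ to identify the argmax, giving $p_r=1>x_i$ independent of acceptance since $i$ is unbiased. Your additional remarks on the order of limits and on ruling out $x_i=1$ are refinements the paper leaves implicit, not a different argument.
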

Next we show that SimpleSALSA and SimpleICF are polarizing only if $i$ is biased.
\begin{thm}\label{thm:salsa-icf-n->infty}
In the limit as $n\rightarrow\infty$,
\begin{enumerate}
\item
SimpleSALSA  is polarizing with respect to $i$ if and only if $i$ is biased.
\item
In the limit as $T\rightarrow \infty$, SimpleICF is polarizing with respect to $i$ if and only if $i$ is biased.
\end{enumerate}
\end{thm}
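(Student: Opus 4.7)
The plan is to apply Lemma~\ref{lem:ssln} to replace every random graph quantity by its limiting value, compute $p(x_i) := \prob[\text{recommended book is } \ttr]$ for each algorithm, and then evaluate $\prob[\ttr \mid \text{accept}]$ under the biased and unbiased acceptance rules and compare it to $x_i$. Write $\sigma^2 := \var(x_1)$; since $g$ is a probability density supported on $[0,1]$ and $\sigma^2 > 0$ by hypothesis, we also have $\sigma^2 < 1/4$.

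For \emph{SimpleSALSA}, I would trace the three-step walk $i \to j \to u \to \ell$. By Lemma~\ref{lem:ssln}(a,b,c), in the limit $i$ has $x_i k$ \ttr\ neighbors and $(1-x_i)k$ \ttb\ neighbors out of $k$, so the first step reaches a \ttr\ book with probability $x_i$. Conditional on $j$ being \ttr, individual $u$ lies in $N(j)$ with probability $x_u k/n$; a Bayes computation then gives a limiting conditional density of $2yg(y)$ on $x_u$, so $\expect[x_u \mid u \in N(j),\, j \text{ is }\ttr] = 2\expect[x_1^2] = \tfrac12 + 2\sigma^2$. The final step from $u$ lands on a \ttr\ book with probability $x_u$ (Lemma~\ref{lem:ssln}(a,b,c) applied at $u$), giving $\prob[\ell\ \ttr \mid j\ \ttr] = \tfrac12 + 2\sigma^2$ and by symmetry $\prob[\ell\ \ttr \mid j\ \ttb] = \tfrac12 - 2\sigma^2$. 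Marginalizing over $j$,
\[
p(x_i) = \tfrac12 + 2\sigma^2(2x_i - 1) = x_i + (1-4\sigma^2)\bigl(\tfrac12 - x_i\bigr).
\]
If $i$ is unbiased then $\prob[\ttr \mid \text{accept}] = p(x_i)$, which for $x_i > \tfrac12$ lies strictly between $\tfrac12$ and $x_i$ (because $0 < 1-4\sigma^2 < 1$), so the algorithm is not polarizing. If $i$ is biased, clearing denominators in $\prob[\ttr \mid \text{accept}] = x_i p(x_i) / [x_i p(x_i) + (1-x_i)(1-p(x_i))]$ reduces the inequality $\prob[\ttr \mid \text{accept}] > x_i$ (for $x_i \in (0,1)$) to $p(x_i) > \tfrac12$, which holds precisely when $x_i > \tfrac12$.

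For \emph{SimpleICF}, let $j_0$ denote the uniformly chosen neighbor of $i$ (renamed to avoid clashing with the degree parameter $k$). By Lemma~\ref{lem:ssln}(a,b,c), $\prob[j_0 \text{ is }\ttr] = x_i$. In the $T \to \infty$ limit the algorithm outputs $\arg\max_\ell \prob[j_0 \xrightarrow{2} \ell]$; using Lemma~\ref{lem:ssln}(a,d) to replace the interior and boundary degrees by their limits turns this into $\arg\max_\ell M_{j_0 \ell}$. Parts (e)--(g) of the lemma give $M_{j_0 \ell} \to mk^2(\tfrac14 + \sigma^2)/n^2$ whenever $\ell$ has the same color as $j_0$ and $M_{j_0 \ell} \to mk^2(\tfrac14 - \sigma^2)/n^2$ whenever $\ell$ has the opposite color, so in the limit the argmax is a book of the same color as $j_0$ (the degenerate case $\ell = j_0$ has $M_{j_0 j_0} = |N(j_0)|$, which is larger but trivially shares $j_0$'s color). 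Hence $p(x_i) = x_i$. Unbiased $i$ then gives $\prob[\ttr \mid \text{accept}] = x_i$: not polarizing. Biased $i$ gives $\prob[\ttr \mid \text{accept}] = x_i^2/(x_i^2 + (1-x_i)^2)$, and the standard factoring $x_i - [x_i^2 + (1-x_i)^2] = (2x_i-1)(1-x_i)$ shows this exceeds $x_i$ exactly when $x_i > \tfrac12$: polarizing.

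The main subtlety is the size-biased conditional in the SimpleSALSA computation: conditioning on the event that $u$ is a neighbor of a \ttr\ book $j$ re-weights $u$'s opinion distribution by a factor of $x_u$, which is precisely why $\var(x_1)$ enters $p(x_i)$ and is what distinguishes SimpleSALSA from SimpleICF. A secondary concern is lifting the almost-sure graph-statistic limits in Lemma~\ref{lem:ssln} to distributional limits of the $O(1)$-step random walks, but since each walk probability is a finite product of ratios of converging quantities this follows from bounded convergence; similarly, the possible non-uniqueness of the $\arg\max$ in SimpleICF does not affect the conclusion because every near-maximizer shares the color of $j_0$ in the limit.
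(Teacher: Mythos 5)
Your proposal is correct and follows essentially the same route as the paper's proof in Appendix~E: use Lemma~\ref{lem:ssln} to compute the limiting probability $p_r$ that the recommended book is \ttr\ (obtaining $p_r = x_i(\frac12 + 2\var(x_1)) + (1-x_i)(\frac12 - 2\var(x_1))$ for SimpleSALSA and $p_r = x_i$ for SimpleICF), then run the Bayes computation on acceptance to conclude that biased acceptance pushes $\prob[\ttr \mid \text{accept}]$ strictly above $x_i$ exactly when $p_r > \frac12$, while unbiased acceptance leaves it at $p_r \le x_i$. The only cosmetic differences are that you derive the middle-step quantity $\frac12 + 2\var(x_1)$ via an explicit size-biased density $2yg(y)$ rather than through the co-neighborhood counts of Lemma~\ref{lem:ssln}(e)--(g), and that you handle the degenerate $\arg\max$ in SimpleICF more carefully than the paper does; both computations yield identical formulas and conclusions.
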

Both Theorem~\ref{thm:ppr-n->infty} and Theorem~\ref{thm:salsa-icf-n->infty} are proved in Appendix~\ref{app:proofs-sec5}.

\section{Discussion of Various Measures of Opinion Divergence}\label{sec:polarization-metrics}
Recall that we define an opinion formation process to be polarizing if it results in an increased divergence of opinions.
Here we describe a number of alternate measures of divergence, and discuss how many of our results hold for these measures.
A generalization of the global disagreement index (GDI) is the following: $\sum_{i < j} h(|x_i - x_j|)$, where $h$ is an arbitrary convex function.
The flocking process has the property that the vector $\bx(t+1)$ is majorized by $\bx(t)$.
 Therefore, as noted in the proof of Theorem~\ref{thm:flocking-not-polarizing}, each opinion update of the  flocking process is depolarizing under this definition, or more generally, when divergence is defined by any symmetric convex function of $\bx$.
 
A stronger definition of divergence is one based on second order stochastic dominance, which is defined over distributions, but can be easily modified to work with vectors.
Informally, a distribution $F$ is second order stochastically dominated by a distribution $G$ if $F$ is a mean-preserving spread of $G$.
 Let us say an opinion formation process is polarizing if the final opinion vector is dominated (second order stochastically) by the initial opinion vector, and is depolarizing if the final vector dominates the initial vector.
 According to this definition, a single opinion update in the DeGroot and flocking processes is in general neither polarizing nor depolarizing.
However, both these processes have been shown to converge to consensus under fairly general conditions (\cite{degroot:consensus, jnt-phdthesis}).
Thus, under those conditions, both these processes are depolarizing in equilibrium. 
Moreover, our results on the three recommender algorithms (Theorem~\ref{thm:ppr-n->infty} and Theorem~\ref{thm:salsa-icf-n->infty}) also hold under this definition of divergence.

Consider the following even stronger definition of polarization: a process is polarizing if at each time step, it pushes the opinions of individuals away from the average and is  depolarizing if  it brings their opinions closer to the average. 
Under this definition too, the DeGroot and flocking processes are neither polarizing nor depolarizing.
However, under all three definitions, the biased opinion formation process is polarizing on a two-island network when $b \ge 1$.

\section{Conclusion}
In this paper we attempted to explain polarization in society through a model of opinion formation.
We generalized DeGroot's repeated averaging model to account for biased assimilation.
We showed that DeGroot-like repeated averaging processes can never be polarizing, even if individuals are arbitrarily homophilous. 
We also showed that in a two-island network, our biased opinion formation process may result in either polarization (if $b\ge 1$), persistent disagreement (if $1 > b \ge 2/(h+1)$), or consensus (if $b < 2/(h+1)$).
In other words, homophily alone, without biased assimilation, is not sufficient to polarize society.
We used biased assimilation to provide insight into the polarizing effects of three recommender algorithms: SimpleSALSA, SimplePPR and SimpleICF.
We showed that for a simple, natural model of the underlying user-item graph, SimpleSALSA and SimpleICF are polarizing only if individuals are biased whereas SimplePPR is polarizing even if  individuals are unbiased.

One direction for further investigation is to study through human subject experiments how the degree of homophily and the strength of biased assimilation affect whether individuals interacting over a network polarize or arrive at a consensus?
Our analysis of recommender algorithms is a first step toward designing algorithms and online social systems that counteract polarization and facilitate greater consensus between individuals over complex and vexing social, economic and political issues.
We view this as a promising and important direction for further research. 
\bibliographystyle{alphaurl}
\bibliography{polarization}

\appendix

\section{Proof of Theorem~\ref{thm:single-agent}}\label{app:proofs-sec2}
Recall that
\[
x(t+1) :=  \frac{wx(t) + (x(t))^bs}{w + (x(t))^bs + (1-x(t))^b(1-s)}
\]
Equivalently,
\beq\label{eq:thm1.1}
\frac{x(t+1)}{1-x(t+1)} = \frac{wx(t) + (x(t))^bs}{w(1-x(t)) + (1-x(t))^b(1-s)} = \frac{w + (x(t))^{b-1}s}{w + (1-x(t))^{b-1}(1-s)} \frac{x(t)}{1-x(t)}
\eeq
First we will show that if $x(t) = \hx$, then for all $t'> t, x(t') = \hx$.
\begin{lem}
Assume $b\ne 1$. Fix $t\ge 0$. Let $x(t) = \hx$. Then for all $t'> t, x(t') = \hx$.
\end{lem}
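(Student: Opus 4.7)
The plan is to verify that $\hat{x}$ is a fixed point of the update rule \eqref{eq:thm1.1}, and then to conclude the claim by straightforward induction on $t'$.

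For the base case, I would use the equivalent form of the update in \eqref{eq:thm1.1}:
\[
\frac{x(t+1)}{1-x(t+1)} = \frac{w + (x(t))^{b-1}s}{w + (1-x(t))^{b-1}(1-s)}\cdot \frac{x(t)}{1-x(t)}.
\]
Thus $x(t+1) = x(t)$ if and only if the ratio multiplying $x(t)/(1-x(t))$ equals $1$, which in turn is equivalent to
\[
(x(t))^{b-1}\, s \;=\; (1-x(t))^{b-1}(1-s).
\]
So it suffices to check that $x(t)=\hat{x}$ satisfies this algebraic identity. From the definition \eqref{def:polarization-threshold}, one reads off directly
\[
\frac{\hat{x}}{1-\hat{x}} \;=\; \frac{s^{1/(1-b)}}{(1-s)^{1/(1-b)}} \;=\; \left(\frac{s}{1-s}\right)^{1/(1-b)},
\]
and raising both sides to the power $b-1$ yields $(\hat{x}/(1-\hat{x}))^{b-1} = (s/(1-s))^{-1} = (1-s)/s$, which rearranges to exactly $\hat{x}^{b-1}s = (1-\hat{x})^{b-1}(1-s)$. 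Hence $x(t+1)=\hat{x}$.

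For the inductive step, the same computation applies verbatim with $t$ replaced by any $t' \ge t$, so if $x(t') = \hat{x}$ then $x(t'+1)=\hat{x}$, completing the induction. There is no real obstacle here: the only thing to be a little careful about is that $b\ne 1$ is needed so that the exponent $1/(1-b)$ in the definition of $\hat{x}$ is well-defined, and that $s\in(0,1)$ (assumed in the setup) ensures both $s^{1/(1-b)}$ and $(1-s)^{1/(1-b)}$ are strictly positive, so no division by zero occurs.
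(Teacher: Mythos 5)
Your proof is correct and follows essentially the same route as the paper's: both verify that $\hat{x}$ satisfies $\hat{x}^{b-1}s = (1-\hat{x})^{b-1}(1-s)$ by rewriting the definition of $\hat{x}$ as $(\hat{x}/(1-\hat{x}))^{1-b} = s/(1-s)$, substitute into \eqref{eq:thm1.1} to conclude $x(t+1)=x(t)$, and extend to all $t'>t$ by induction (which the paper leaves implicit). Your added remarks on why $b\ne 1$ and $s\in(0,1)$ matter are a minor bonus but do not change the argument.
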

\begin{proof}
To prove the lemma, it suffices to show that $x(t+1) = x(t) = \hx$.
Recall that
\[
\hx := \frac{s^{1/(1-b)}}{s^{1/(1-b)} + (1-s)^{1/(1-b)}}
\]
Or equivalently,
\[
\left(\frac{\hx}{1-\hx}\right)^{1-b} = \frac{s}{1-s} 
\]
This implies that when $x(t) = \hx$, $x(t)^{b-1}s = (1-x(t))^{b-1}(1-s)$. Substituting this in \eqref{eq:thm1.1}, we get that
\[
\frac{x(t+1)}{1-x(t+1)} = \frac{x(t)}{1-x(t)}
\]
Or equivalently, $x(t+1) = x(t)$.
\end{proof}

Next we will show that when $b > 1$, $\hx$ is an unstable equilibrium.
\begin{lem}\label{lem:thm1-monotone}
Let $b > 1$. Fix $t \ge 0$. 
\begin{enumerate}
\item
If $x(t) > \hx$, then $x(t+1) > x(t)$.
\item
If $x(t) < \hx$, then $x(t+1) < x(t)$.
\end{enumerate}
\end{lem}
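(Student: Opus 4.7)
The plan is to work from the already-rearranged update equation \eqref{eq:thm1.1}, which expresses the odds ratio $x(t+1)/(1-x(t+1))$ as a multiplicative factor times $x(t)/(1-x(t))$. Since the map $y \mapsto y/(1-y)$ is strictly increasing on $(0,1)$, comparing $x(t+1)$ and $x(t)$ is equivalent to comparing the multiplicative factor $\frac{w + x(t)^{b-1}s}{w + (1-x(t))^{b-1}(1-s)}$ with $1$. The contribution $w$ cancels in the sense that this ratio exceeds $1$ if and only if $x(t)^{b-1} s > (1-x(t))^{b-1}(1-s)$, and is less than $1$ if and only if the reverse strict inequality holds.

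Next I would reduce the sign test to a comparison against $\hx$. Dividing both sides by $(1-x(t))^{b-1} s$ shows that $x(t)^{b-1}s > (1-x(t))^{b-1}(1-s)$ is equivalent to $\left(\frac{x(t)}{1-x(t)}\right)^{b-1} > \frac{1-s}{s}$. From the defining identity $\left(\frac{\hx}{1-\hx}\right)^{1-b} = \frac{s}{1-s}$ (used in the previous lemma), we get $\left(\frac{\hx}{1-\hx}\right)^{b-1} = \frac{1-s}{s}$, so the threshold condition becomes $\left(\frac{x(t)}{1-x(t)}\right)^{b-1} > \left(\frac{\hx}{1-\hx}\right)^{b-1}$.

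Finally, since $b > 1$ gives $b-1 > 0$, the function $y \mapsto (y/(1-y))^{b-1}$ is strictly increasing on $(0,1)$, so the last inequality holds precisely when $x(t) > \hx$. Combining with the first paragraph yields part~1: $x(t) > \hx \Rightarrow x(t+1) > x(t)$. The symmetric argument, reversing every strict inequality, yields part~2. There is no real obstacle here; the main content is just noticing that the monotonicity of $y \mapsto y/(1-y)$ and of $y \mapsto (y/(1-y))^{b-1}$ together let us reduce the dynamical comparison to the static comparison of $x(t)$ with $\hx$, and that the defining equation of $\hx$ is exactly the fixed-point condition for the ratio factor to equal~$1$.
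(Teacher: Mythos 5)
Your proposal is correct and follows essentially the same route as the paper's proof: both start from the odds-ratio form \eqref{eq:thm1.1}, reduce the comparison of $x(t+1)$ with $x(t)$ to the sign of $(x(t))^{b-1}s - (1-x(t))^{b-1}(1-s)$, and then use the defining identity $\left(\frac{\hx}{1-\hx}\right)^{1-b} = \frac{s}{1-s}$ together with the monotonicity of $y\mapsto (y/(1-y))^{b-1}$ for $b>1$ to turn this into the comparison of $x(t)$ with $\hx$. The only cosmetic difference is that you phrase the monotonicity with the positive exponent $b-1$ while the paper uses $1-b$ and flips the inequality.
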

\begin{proof}
Again, recall that
\[
\left(\frac{\hx}{1-\hx}\right)^{1-b} = \frac{s}{1-s} 
\]
Therefore, if $x(t) > \hx$, it implies that
\[
\frac{x(t)}{1-x(t)} > \frac{\hx}{1-\hx}  \Rightarrow \left(\frac{x(t)}{1-x(t)}\right)^{1-b} < \left(\frac{\hx}{1-\hx}\right)^{1-b}  = \frac{s}{1-s} \text{ (since $b > 1$)}
\]
Or equivalently, $(x(t))^{b-1}s > (1-x(t))^{b-1}(1-s)$. Substituting this in \eqref{eq:thm1.1}, we get that
\[
\frac{x(t+1)}{1-x(t+1)} > \frac{x(t)}{1-x(t)}
\]
Or equivalently, $x(t+1) > x(t)$.

By a similar argument, if $x(t) < \hx$, then$(x(t))^{b-1}s < (1-x(t))^{b-1}(1-s)$. Again, substituting this in \eqref{eq:thm1.1}, we get that
\[
\frac{x(t+1)}{1-x(t+1)} < \frac{x(t)}{1-x(t)}
\]
Or equivalently, $x(t+1) < x(t)$.
\end{proof}
Next we will show that when $b > 1$, either $\lim_{t\rightarrow\infty} x(t) = 1$ or $\lim_{t\rightarrow\infty} x(t) = 0$.
\begin{lem}
Let $b > 1$. Fix $t \ge 0$. 
\begin{enumerate}
\item
If $x(t) > \hx$, then $\lim_{t\rightarrow\infty} x(t) = 1$.
\item
If $x(t) < \hx$, then $\lim_{t\rightarrow\infty} x(t) = 0$.
\end{enumerate}
\end{lem}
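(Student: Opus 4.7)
The plan is to promote the one-step monotonicity from Lemma~\ref{lem:thm1-monotone} into a statement about the long-run limit, using the standard fact that a bounded monotone sequence converges together with a characterization of the fixed points of the update map. Suppose first that $x(t) > \hx$. Iterating Lemma~\ref{lem:thm1-monotone} gives $x(t+1) > x(t) > \hx$, so the hypothesis applies again and yields $x(t+2) > x(t+1) > \hx$; by induction $\{x(t+n)\}_{n \ge 0}$ is a strictly increasing sequence bounded above by $1$, so it converges to some limit $L \in (\hx, 1]$. Symmetrically, if $x(t) < \hx$, the iterates form a strictly decreasing sequence bounded below by $0$, converging to some $L' \in [0, \hx)$.

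Next I would argue that any such limit must be a fixed point of the update map
\[
F(y) := \frac{wy + y^b s}{w + y^b s + (1-y)^b(1-s)}.
\]
For $y \in [0,1]$ with $s \in (0,1)$ the denominator is strictly positive, so $F$ is continuous on $[0,1]$; passing to the limit in $x(t+n+1) = F(x(t+n))$ then gives $L = F(L)$ (and likewise $L' = F(L')$). To identify the fixed points, note that $F(0) = 0$ and $F(1) = 1$ trivially, while for $y \in (0,1)$ equation~\eqref{eq:thm1.1} rewrites $F(y) = y$ as $y^{b-1}s = (1-y)^{b-1}(1-s)$, i.e.\ $\bigl(y/(1-y)\bigr)^{b-1} = (1-s)/s$. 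Since $b > 1$ the left-hand side is a strictly increasing bijection from $(0,1)$ onto $(0,\infty)$, so this equation has a unique interior solution, which by the definition of $\hx$ is precisely $y = \hx$. Hence the full fixed-point set of $F$ on $[0,1]$ is $\{0, \hx, 1\}$.

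Combining the two steps finishes the argument: the increasing limit $L$ lies in $(\hx, 1]$ and in $\{0, \hx, 1\}$, forcing $L = 1$, which proves part (i); symmetrically, $L' \in [0, \hx) \cap \{0, \hx, 1\} = \{0\}$, proving part (ii). I do not anticipate a serious obstacle here: the only subtlety is verifying that Lemma~\ref{lem:thm1-monotone} can be iterated, and this is immediate since strict monotonicity of the iterates keeps them on the correct side of $\hx$. The remainder is just the monotone convergence theorem plus continuity of $F$, with the one nontrivial algebraic step being the uniqueness of the interior fixed point, which uses $b > 1$ in exactly the same way as the previous lemma.
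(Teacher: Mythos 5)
Your proof is correct, and it shares its first half with the paper's: both iterate the one-step monotonicity lemma (Lemma~\ref{lem:thm1-monotone}) to get a strictly monotone bounded orbit, hence a limit $L$. Where you differ is in ruling out an interior limit. The paper argues by contradiction: if $L = a < 1$, then the increment $g(y) = F(y) - y$ must become arbitrarily small along the orbit, yet $g$ is continuous and strictly positive on the compact interval $[x(t), a]$, hence bounded below by some $c > 0$ there --- a contradiction. You instead pass to the limit in $x(t+n+1) = F(x(t+n))$ using continuity of $F$ to conclude that $L$ is a fixed point, and then explicitly compute the full fixed-point set of $F$ on $[0,1]$ to be $\{0, \hx, 1\}$, the interior equation $\bigl(y/(1-y)\bigr)^{b-1} = (1-s)/s$ having a unique solution precisely because $b>1$. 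Your route is cleaner and makes the unstable-equilibrium picture explicit, but it requires the extra uniqueness step, which the paper never proves (and never needs --- it only uses that $g > 0$ on all of $[x(t),1)$, which already follows from Lemma~\ref{lem:thm1-monotone}). Both arguments are sound, and your verification that the denominator of $F$ is strictly positive on $[0,1]$ (so $F$ is continuous, including at the endpoints) correctly handles the one point where this kind of argument could go wrong.
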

\begin{proof}
For the proof, we will assume that $x(t) > \hx$ and show that $\lim_{t\rightarrow\infty} x(t) = 1$. The case when $x(t) < \hx$ can be argued in an analogous way.

By definition, we know that for all $t\ge 0, x(t) \in [0,1]$.
Further, from Lemma~\ref{lem:thm1-monotone}, we know that the sequence $\{x(t')_{t'\ge t}\}$ is strictly increasing.
Since the sequence is strictly increasing and bounded, it must converge either to 1 or to some value in the interval $[x(t), 1)$.
Consider the function $g: [0,1]\rightarrow \real$ defined as
\[
g(y) := \frac{w + y^bs}{w + y^bs + (1-y)^b(1-s)} - y
\]
Observe that for all $t\ge 0,\ x(t+1) - x(t) = g(x(t))$. 
Therefore, 
\begin{enumerate}[(a)]
\item
for all $y\in [x(t),1),\ g(y) > 0$ (since, by Lemma~\ref{lem:thm1-monotone}, the sequence $\{x(t')_{t'\rightarrow t}\}$ is strictly increasing), and
\item
$g(1) = 0$.
\end{enumerate}
For the purpose of contradiction, assume that $\lim_{t\rightarrow \infty} x(t) = a$, where $x(t) \le a < 1$. 
This implies, for every $\e > 0$, there exists a $t(\e)$ such that for all $t' \ge t(\e),\ x(t'+1) - x(t') < \e$, or equivalently, that for all $t' \ge t(\e),\ g(x(t')) < \e$.

Let $\min_{y\in[x(t),a]} g(y) = c$. It implies for all $y\in [x(t), a],\ g(y) \ge c$. From (a), it follows that $c > 0$. Setting $\e = c$, our analysis implies the following two properties of $g$: (1) for all $t\ge 0, g(x(t)) \ge c$, and (2) for all $t' \ge t(\e), g(x(t')) < c$, which contradict each other. This completes the proof by contradiction.
\end{proof}
Using a similar argument we can show that when $b < 1$, $\hx$ is a stable equilibrium.
\begin{lem}
Let $b < 1$. Fix $t \ge 0$. 
\begin{enumerate}
\item
If $x(t) > \hx$, then $x(t+1) < x(t)$.
\item
If $x(t) < \hx$, then $x(t+1) > x(t)$.
\end{enumerate}
\end{lem}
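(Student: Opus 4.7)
The plan is to mirror the structure of the proof of Lemma~\ref{lem:thm1-monotone} (the analogous statement for $b>1$), carefully tracking where the direction of an inequality flips because $1-b>0$ rather than $1-b<0$. The only ingredients we need are equation \eqref{eq:thm1.1}, namely
\[
\frac{x(t+1)}{1-x(t+1)} = \frac{w + (x(t))^{b-1}s}{w + (1-x(t))^{b-1}(1-s)} \cdot \frac{x(t)}{1-x(t)},
\]
together with the equivalent form of the definition of $\hx$, namely $\bigl(\hx/(1-\hx)\bigr)^{1-b} = s/(1-s)$.

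Suppose first that $x(t) > \hx$. Then $x(t)/(1-x(t)) > \hx/(1-\hx)$, and since now the exponent $1-b$ is \emph{positive}, raising both sides to the power $1-b$ preserves the inequality and yields $\bigl(x(t)/(1-x(t))\bigr)^{1-b} > s/(1-s)$. Rearranging this shows $(x(t))^{b-1}s < (1-x(t))^{b-1}(1-s)$ — the opposite direction from the $b>1$ case. Plugging this into the displayed identity above, the multiplicative factor in front of $x(t)/(1-x(t))$ is strictly less than $1$, so $x(t+1)/(1-x(t+1)) < x(t)/(1-x(t))$, and since the map $u \mapsto u/(1-u)$ is increasing on $[0,1)$, this is equivalent to $x(t+1) < x(t)$. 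The case $x(t) < \hx$ is symmetric: the chain of inequalities is reversed throughout, giving $(x(t))^{b-1}s > (1-x(t))^{b-1}(1-s)$, hence the multiplicative factor exceeds $1$, hence $x(t+1) > x(t)$.

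There is essentially no obstacle here beyond bookkeeping. The one place one must be careful is the double sign-flip: (i) because $1-b>0$, raising to that power preserves the inequality $x(t)/(1-x(t)) > \hx/(1-\hx)$, whereas in Lemma~\ref{lem:thm1-monotone} the exponent was negative and the inequality flipped; and (ii) rewriting the resulting inequality in the form involving $(x(t))^{b-1}$ versus $(1-x(t))^{b-1}$ requires taking reciprocals. These two reversals conspire so that the conclusion comes out in the direction opposite to the $b>1$ case, which is exactly what stability of $\hx$ for $b<1$ requires. No additional machinery is needed; the statement is purely a dual computation to the one already carried out.
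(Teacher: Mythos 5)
Your proof is correct and is exactly the argument the paper intends: the paper states this lemma without proof, deferring to "a similar argument" to Lemma~\ref{lem:thm1-monotone}, and what you have written is precisely that dual computation, with the sign of the exponent $1-b$ and the reciprocal step both handled correctly. Nothing further is needed.
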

\begin{lem}
Let $b < 1$. Then, $\lim_{t\rightarrow\infty} x(t) = \hx$.
\end{lem}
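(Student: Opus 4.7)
The plan is to combine the sign-of-step information from the previous lemma with a monotonicity property of the update map itself to conclude convergence to $\hx$. Write
\[
F(x) := \frac{wx + x^b s}{w + x^b s + (1-x)^b(1-s)},
\]
so that $x(t+1) = F(x(t))$ and $F(\hx) = \hx$. The preceding lemma tells us that for $b < 1$ the sign of $F(x) - x$ opposes the sign of $x - \hx$, but this alone does not rule out an overshoot past $\hx$ in a single step; without ruling that out, the iterates could in principle oscillate around $\hx$ and fail to be monotone.

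First I would show that $F$ is strictly increasing on $[0,1]$. Writing $F = u/(u+v)$ with $u(x) := wx + x^b s$ and $v(x) := w(1-x) + (1-x)^b(1-s)$, one checks $u'(x) > 0$ and $v'(x) < 0$ on $(0,1)$, so $F'(x) = (u'(x)v(x) - u(x)v'(x))/(u(x)+v(x))^2 > 0$. Combined with $F(\hx) = \hx$, monotonicity gives $F([\hx,1]) \subseteq [\hx,1]$ and $F([0,\hx]) \subseteq [0,\hx]$, so the iterates never cross $\hx$.

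With no overshoot, the previous lemma makes $\{x(t)\}$ strictly monotone and bounded (decreasing in $[\hx, x(0)]$ if $x(0) > \hx$, increasing in $[x(0), \hx]$ if $x(0) < \hx$), hence convergent to some limit $a$ on the same side of $\hx$ as $x(0)$. By continuity of $F$ we must have $F(a) = a$. Solving $F(x) = x$ in $(0,1)$ reduces, exactly as in the $x(t) = \hx$ argument of the preceding lemma, to $(x/(1-x))^{b-1} = s/(1-s)$, whose unique solution in $(0,1)$ is $\hx$; hence $a = \hx$.

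The one real obstacle is the no-overshoot claim, for which strict monotonicity of $F$ is essential; everything else is a standard monotone-convergence-plus-continuity argument. An alternative route, closer in spirit to the $b > 1$ convergence argument given earlier, would be to argue by contradiction: if $x(t) \to a \ne \hx$, then $g(y) := F(y) - y$ is bounded away from $0$ on the closed interval between $a$ and $\hx$, contradicting $x(t+1) - x(t) = g(x(t)) \to 0$. I expect the monotonicity route to be the shorter of the two.
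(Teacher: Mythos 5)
Your proof is correct, and it is worth noting that the paper gives essentially no proof of this lemma: it merely asserts that the $b<1$ case follows by ``a similar argument'' to the $b>1$ case and states the two lemmas without proof. You have correctly identified why a naive port of the $b>1$ argument does not work: there the orbit moves \emph{away} from $\hx$, so the sign-of-step lemma immediately yields a monotone bounded sequence, whereas for $b<1$ the orbit moves \emph{toward} $\hx$ and one must separately rule out overshooting and oscillation before ``monotone bounded implies convergent'' applies. Your fix --- showing the update map $F = u/(u+v)$ is strictly increasing via $u' > 0$, $v' < 0$, so that $[0,\hx]$ and $[\hx,1]$ are forward-invariant --- is exactly the missing ingredient, and the remainder (convergence of a monotone bounded sequence to a fixed point of the continuous map $F$, plus uniqueness of the fixed point in $(0,1)$ from $(x/(1-x))^{1-b} = s/(1-s)$) is standard and matches the style of the paper's $b>1$ argument. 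The only degenerate case your derivative computation misses is $b=0$ with $w=0$, where $F$ is constant equal to $s=\hx$ and convergence occurs in one step, so nothing is lost. Your fallback contradiction route would still need the invariance/monotonicity step to guarantee that the limit $a$ exists at all, so the monotonicity route is not just shorter but necessary; in that sense your write-up is more complete than what the paper provides.
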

\section{Proofs of Section 3}\label{app:proofs-sec3}

\begin{proof}[Proof of Theorem~\ref{thm:degroot-not-polarizing}]
Recall that since $b_i = 0$, the opinion of node $i$ at time $t+1$ is given by
\begin{equation}\label{eq:5.1}
x_i(t+1) = \frac{w_{ii}x_i(t) + \sum_{j\in N(i)} w_{ij}x_j(t)}{w_{ii}+d_i}
\end{equation}
where recall that $d_i := \sum_{j\in N(i)} w_{ij}$ is the weighted degree of node $i$.
Let $L_G$ be the weighted laplacian matrix of $G$. Recall that $L_G$ is given by
\[
(L_G)_{ij} = \left\{
\begin{array}{rl}
d_i,&\ \text{ if } i = j\\
-w_{ij},&\ \text{ if } (i,j)\in E\\
0,&\ \text{ otherwise}
\end{array}
\right.
\]
Now consider the vector $L_G\bx(t)$. The $i$th entry of the vector is given by
\begin{align*}
(L_G\bx(t))_i = d_ix_i(t) - \sum_{j\in N(i)} w_{ij}x_j(t) &= d_ix_i(t) + w_{ii}x_i(t) - \left(w_{ii}x_i(t) + \sum_{j\in N(i)} w_{ij}x_j(t)\right)\\
&= (d_i + w_{ii})(x_i(t) - x_i(t+1)) \text{ (from \eqref{eq:5.1})}
\end{align*}
Equivalently, in matrix notation,
\begin{equation}\label{eq:5.2}
\bx(t+1) = (I - DL_G)\bx(t)
\end{equation}
where, $D$ is a diagonal matrix such that $D_{ii} = 1/(d_i + w_{ii})$. Note that since $G$ is connected, $d_i > 0$, and therefore $D_{ii}$ is finite.
Consider the difference $\eta(G, \bx(t+1)) - \eta(G, \bx(t))$. Observe that for a vector $\by\in [0,1]^n,\ \eta(G, \by) = \by^{\top}L_G\by$. Therefore, we have that
\begin{align*}
\eta(G, \bx(t+1)) - \eta(G, \bx(t)) &= (\bx(t+1))^{\top}	L_G(\bx(t+1)) - (\bx(t))^{\top}L_G\bx(t) \\
&= (\bx(t))^{\top}(I - DL_G)^{\top}L_G(I - DL_G)\bx(t) - (\bx(t))^{\top}L_G\bx(t) \text{ (from \eqref{eq:5.2})}\\
&= (\bx(t))^{\top}\left((L_G - L_GDL_G)(I - DL_G) - L_G\right) \bx(t) \text{ (since $L_G$ is symmetric)}\\
&= (\bx(t))^{\top}\left(L_G - L_GDL_G - L_GDL_G - L_GDL_GDL_G - L_G\right) \bx(t)\\
&= (\bx(t))^{\top}\left(L_GDL_GDL_G - 2L_GDL_G \right) \bx(t)\\
&= (\bx(t))^{\top}L^{\top}_GD^{1/2}((D^{1/2}L_GD^{1/2} - 2I))D^{1/2}L_G\bx(t) \text{ (since $L_G$ is symmetric)}\\
&= \by^{\top}(D^{1/2}L_GD^{1/2} - 2I)\by \text{ (where $\by := D^{1/2}L_G\bx(t)$)}
\end{align*}
Thus, in order to show that $\eta(G, \bx(t+1)) - \eta(G, \bx(t)) \le 0$, it suffices to show that for all vectors $\by\in \real^n,\ \by^{\top}D^{1/2}L_GD^{1/2}\by \le 2||\by||^2_2$. We prove this as Lemma~\ref{lem:lin-algebra}.
\end{proof}

\begin{lem}\label{lem:lin-algebra}
Consider an arbitrary weighted undirected graph $G = (V,E,w)$ over $n$ nodes. Let $L_G$ be the weighted laplacian matrix of $G$. Let $D$ be an $n\times n$ diagonal matrix such that for $i=1,\dotsc, n,\ D_{ii} = 1/(d_i + w_{ii})$, where $d_i = \sum_{j\in N(i)} w_{ij}$ is the weighted degree of $i$ in $G$. Let $\by\in \real^n$ be an arbitrary vector. Then, $\by^{\top}D^{1/2}L_GD^{1/2}\by \le 2||\by||^2_2$.
\end{lem}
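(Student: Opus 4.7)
The plan is to reduce the quadratic-form inequality to a clean inequality on the Laplacian by a change of variables. Introduce $\bz := D^{-1/2} \by$ so that $\by = D^{1/2} \bz$. (Since $G$ is connected we have $d_i > 0$, so $D$ is positive definite and this substitution is invertible.) Then the left-hand side becomes
\[
\by^{\top} D^{1/2} L_G D^{1/2} \by \;=\; \bz^{\top} L_G \bz \;=\; \sum_{(i,j)\in E} w_{ij} (z_i - z_j)^2,
\]
using the standard quadratic-form representation of a weighted Laplacian, while the right-hand side becomes
\[
2\|\by\|_2^2 \;=\; 2\, \bz^{\top} D^{-1} \bz \;=\; 2 \sum_{i\in V} (d_i + w_{ii}) z_i^2.
\]

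Wait — I should be slightly careful about the direction of the substitution, since $D^{1/2}$ need not equal $(D^{-1})^{1/2}$'s intended target. Let me set $\bz := D^{1/2}\by$ instead, so that $\by = D^{-1/2}\bz$; then $\by^\top D^{1/2} L_G D^{1/2} \by$ does not simplify directly. I will instead invoke the substitution $\bz = D^{-1/2}\by$ in the other direction: $\by^\top D^{1/2}L_G D^{1/2}\by$ with $\by = D^{1/2}\bz$ gives $\bz^\top D^{1/2}D^{1/2}L_G D^{1/2}D^{1/2}\bz = \bz^\top D L_G D \bz$, which is not what I want. So the correct move is to substitute $\bz = D^{1/2}\by$, giving $\by = D^{-1/2}\bz$ and $\|\by\|^2 = \bz^\top D^{-1}\bz = \sum_i (d_i+w_{ii}) z_i^2$, while $\by^\top D^{1/2}L_G D^{1/2}\by = \bz^\top L_G \bz = \sum_{(i,j)\in E} w_{ij}(z_i-z_j)^2$. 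That is the correct reduction.

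With the reduction in hand, the inequality to prove is
\[
\sum_{(i,j)\in E} w_{ij}(z_i - z_j)^2 \;\le\; 2 \sum_{i\in V} (d_i + w_{ii}) z_i^2.
\]
This follows from the elementary bound $(z_i - z_j)^2 \le 2(z_i^2 + z_j^2)$: summing over edges gives
\[
\sum_{(i,j)\in E} w_{ij}(z_i - z_j)^2 \;\le\; 2 \sum_{(i,j)\in E} w_{ij}(z_i^2 + z_j^2) \;=\; 2 \sum_{i \in V} d_i\, z_i^2,
\]
where the last equality uses that each $z_i^2$ appears weighted by $\sum_{j \in N(i)} w_{ij} = d_i$. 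Finally, since $w_{ii} \ge 0$ we have $2\sum_i d_i z_i^2 \le 2\sum_i (d_i + w_{ii}) z_i^2$, completing the proof.

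I do not expect any real obstacle here: the only subtle point is to pick the correct direction of the substitution so that the Laplacian quadratic form appears naturally on the left and the weighted $\ell_2$ norm $\sum_i(d_i+w_{ii})z_i^2$ appears on the right; after that, the inequality is just a one-line application of $(a-b)^2\le 2(a^2+b^2)$ together with the definition of weighted degree. The factor $2$ on the right is precisely what $(a-b)^2\le 2(a^2+b^2)$ supplies, so the bound is tight up to this universal constant and no spectral machinery is needed.
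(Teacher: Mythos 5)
Your proof is correct and is essentially the paper's argument in cleaner packaging: the substitution $z_i = y_i/\sqrt{d_i+w_{ii}}$ turns the left-hand side into the Laplacian quadratic form $\sum_{(i,j)\in E} w_{ij}(z_i-z_j)^2$, which is exactly the expression the paper reaches by expanding $D^{1/2}L_GD^{1/2}$ entrywise, and both proofs then conclude via $(a-b)^2\le 2(a^2+b^2)$ together with $d_i\le d_i+w_{ii}$. The only blemish is the visible false start about the direction of the substitution (the correct reduction is the one you end with), which should be excised from a final write-up.
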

\begin{proof}
For $i = 1, \dotsc, n$, let $r_i := d_i + w_{ii}$.
Let $P := D^{1/2}L_GD^{1/2}$. Then,
\[
P_{ij} = \left\{
\begin{array}{rl}
\frac{d_i}{r_i},&\ i = j\\
\frac{-w_{ij}}{\sqrt{r_ir_j}},&\ (i,j)\in E\\
0,&\ \text{ otherwise}
\end{array}
\right.
\]
Then, we have that
\begin{align*}
\by^{\top}P\by = \sum_{i,j} P_{ij}y_iy_j &= \sum_{i=1}^n P_{ii}y_i^2 + 2\sum_{(i,j)\in E} P_{ij}y_iy_j = \sum_i \frac{d_i}{r_i}y_i^2 - 2\sum_{(i,j)\in E} \frac{w_{ij}}{\sqrt{r_ir_j}}y_iy_j\\
&= \sum_i\left(\frac1{r_i}y_i^2\sum_{j\in N(i)} w_{ij}\right)  - 2\sum_{(i,j)\in E} \frac{w_{ij}}{\sqrt{r_ir_j}}y_iy_j\\
&= \sum_{(i,j)\in E} w_{ij}\left(\frac{y_i^2}{r_i} + \frac{y_j^2}{r_j}\right)- 2\sum_{(i,j)\in E} \frac{w_{ij}}{\sqrt{r_ir_j}}y_iy_j\\
&= \sum_{(i,j)\in E} w_{ij}\left(\frac{y_i}{\sqrt{r_i}} - \frac{y_j}{\sqrt{r_j}}\right)^2\\
&= -\sum_{(i,j)\in E} w_{ij}\left(\frac{y_i}{\sqrt{r_i}} + \frac{y_j}{\sqrt{r_j}}\right)^2 + 2\sum_i \frac{d_i}{r_i}y_i^2\\
&\le -\sum_{(i,j)\in E} w_{ij}\left(\frac{y_i}{\sqrt{r_i}} + \frac{y_j}{\sqrt{r_j}}\right)^2 + 2\sum_i y_i^2 \text{ (since $d_i \le r_i$)}\\
&\le 2||\by||_2^2
\end{align*}
\end{proof}

\begin{proof}[Proof of Theorem~\ref{thm:flocking-not-polarizing}]
Let $|S(t)| = k$. Then, the opinion update \eqref{def:flocking-update} under the flocking process can be written in matrix form as
\[
\bx(t+1) = (1-\e)\bx(t) + \e A(t)\bx(t)
\]
where $A(t)$ is a $n\times n$ matrix given by
\[
A_{ij}(t) = \left\{
\begin{array}{rl}
\frac1k,&\ \text{ if } i\in S(t), j\in S(t)\\
1,&\ \text{ if } i = j\text{ and } i\notin S(t)\\
0,&\ \text{ otherwise}
\end{array}	
\right.
\]
Observe that $A(t)$ is doubly-stochastic. Then
\begin{align*}
\g(\bx(t+1)) &= \g((1-\e)\bx(t) + \e A(t)\bx(t)) \text{ (by definition of $\bx(t+1)$)}\\
&\le (1-\e)\g(\bx(t)) + \e\g(A(t)\bx(t)) \text{ (since $\g$ is convex in $\bx$)}\\
&\le  (1-\e)\g(\bx(t)) + \e\g(\bx(t)) \text{ (by Proposition~\ref{prop:schur-convex})}\\
&= \g(\bx(t))
\end{align*}

\begin{prop}\label{prop:schur-convex}
$\g(A(t)\bx(t)) \le \g(\bx(t))$.
\end{prop}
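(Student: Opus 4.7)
The plan is to exploit the doubly-stochastic structure of $A(t)$ together with the fact that $\g$ is a symmetric convex function of the coordinates of $\bx$. First, I would rewrite the GDI in a matrix-friendly form. By expanding $(x_i-x_j)^2$ and summing over all ordered pairs,
\[
\g(\bx) \;=\; \sum_{i<j}(x_i-x_j)^2 \;=\; n\sum_{i=1}^n x_i^2 \;-\; \Bigl(\sum_{i=1}^n x_i\Bigr)^{\!2}.
\]
So it suffices to track how the two quantities $\sum_i x_i$ and $\sum_i x_i^2$ behave under $\bx \mapsto A(t)\bx$.

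Next, I would verify that $A(t)$ is doubly-stochastic: the non-trivial $|S(t)|\times |S(t)|$ block has all entries $1/k$ with $k=|S(t)|$, so its row and column sums within the block equal 1; outside $S(t)$ the matrix is diagonal with 1's. From double-stochasticity, two facts follow. (i) $\sum_i (A(t)\bx)_i = \sum_i x_i$, because columns of $A(t)$ sum to 1, so the linear term in $\g$ is invariant. (ii) $\sum_i (A(t)\bx)_i^2 \le \sum_i x_i^2$: apply Jensen's inequality row-wise, using that each row is a probability distribution, to get $(A(t)\bx)_i^2 \le \sum_j A_{ij}(t)\, x_j^2$, and then swap the order of summation, using column-stochasticity, to bound the total by $\sum_j x_j^2$. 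Combining (i) and (ii) gives $\g(A(t)\bx)\le \g(\bx)$, which is the claim.

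Alternatively, one can invoke the Hardy-Littlewood-Polya theorem: a vector $\by$ is majorized by $\bx$ if and only if $\by = A\bx$ for some doubly-stochastic $A$. Since $\g$ is symmetric and convex in its coordinates, it is Schur-convex, so $\g(A(t)\bx)\le \g(\bx)$ follows immediately; this is the view hinted at by the name of the proposition. I do not expect any substantive obstacle — the whole argument is essentially linear algebra plus one application of Jensen's inequality. The only item to be careful about is confirming doubly-stochasticity of $A(t)$ from the piecewise definition in the proof of Theorem~\ref{thm:flocking-not-polarizing}, which is immediate from inspection.
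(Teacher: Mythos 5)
Your proposal is correct, and your primary argument takes a genuinely more elementary route than the paper. The paper's proof is exactly your ``alternative'': it sets $\by := A(t)\bx(t)$, invokes the Hardy--Littlewood--Polya theorem to conclude that $\bx(t)$ majorizes $\by$, and then uses the fact that $\g$, being symmetric and convex, is Schur-convex. Your main argument instead avoids majorization theory entirely by exploiting the specific quadratic structure of the GDI: the identity $\g(\bx) = n\sum_i x_i^2 - \bigl(\sum_i x_i\bigr)^2$ reduces the claim to conservation of $\sum_i x_i$ (column-stochasticity) and non-increase of $\sum_i x_i^2$ (row-wise Jensen plus column-stochasticity), both of which you verify correctly. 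What each approach buys: your computation is self-contained and requires only Jensen's inequality, but it is tied to $\g$ being a quadratic; the paper's Schur-convexity argument applies verbatim to any symmetric convex function of $\bx$, which is precisely the generality the authors rely on later in Section~\ref{sec:polarization-metrics} when they extend Theorem~\ref{thm:flocking-not-polarizing} to divergence measures of the form $\sum_{i<j} h(|x_i - x_j|)$ for arbitrary convex $h$. Since you also state the majorization route explicitly, nothing is missing; the only caveat is that if one wanted the generalized statement, one would need to promote your ``alternative'' to the main argument.
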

\begin{proof}
Let $\by := A(t)\bx(t)$. Since $A(t)$ is doubly stochastic, it follows by a famous theorem by Hardy, Littlewood and Polya, that $\bx(t)$ majorizes $\by$. Moreover, $\g(\bx)$ is a convex symmetric function. Therefore, it is a Schur-convex function. By definition, a function $f: \real^n\rightarrow \real$ is Schur-convex if $f(\bx_1) \ge f(\bx_2)$ whenever $\bx_1$ majorizes $\bx_2$.   Therefore, $\g(\by) \le \g(\bx(t))$.
\end{proof}
\end{proof}

\section{Proof of Theorem~\ref{thm:homophily-polarization}}\label{app:proofs-sec4}
To prove the theorem, we begin by making three simple observations that hold for all $b\ge 0$.
The first observation follows directly from the symmetry of nodes in each set $V_1$ and $V_2$.
\begin{lem}\label{lem:identical-expectation}
Consider nodes $i, j\in V$ such that either both $i, j\in V_1$ or both $i, j\in V_2$. Then for all $t\ge 0,\ x_i(t) = x_j(t)$.
\end{lem}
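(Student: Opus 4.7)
The plan is to prove the lemma by induction on $t$, leveraging the deterministic regularity of the two-island network and the fact that the biased update rule \eqref{def:biased-update} for node $i$ depends only on the four quantities $x_i(t)$, $s_i(t)$, $d_i$, and $w_{ii}$ (together with the common bias parameter $b$).

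The base case $t=0$ holds by hypothesis: every $i\in V_1$ has $x_i(0)=x_0$ and every $i\in V_2$ has $x_i(0)=1-x_0$. For the inductive step, suppose that at some time $t\ge 0$ there exist values $\alpha,\beta\in[0,1]$ such that $x_i(t)=\alpha$ for all $i\in V_1$ and $x_j(t)=\beta$ for all $j\in V_2$. Fix any $i\in V_1$. By the definition of the $(n,n,p_s,p_d)$-two-island network, $i$ has exactly $np_s$ neighbors in $V_1$ and $np_d$ neighbors in $V_2$, all edges carry weight $1$, and $w_{ii}=0$. Hence
\[
d_i \;=\; np_s + np_d, \qquad s_i(t)\;=\;np_s\,\alpha + np_d\,\beta,
\]
and these quantities are the \emph{same} for every $i\in V_1$. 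Substituting $x_i(t)=\alpha$, $w_{ii}=0$, and $b_i=b$ into \eqref{def:biased-update} yields a value of $x_i(t+1)$ that depends only on $\alpha$, $\beta$, $p_s$, $p_d$, $n$, and $b$, and hence is identical for all $i\in V_1$. The same reasoning applied to any $i\in V_2$ (with the roles of $p_s$-neighbors and $p_d$-neighbors interchanged) shows that $x_i(t+1)$ takes a common value across $V_2$. This closes the induction and establishes the lemma.

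The argument is purely a symmetry/regularity observation, so I do not expect any genuine obstacle; the only point that merits care is checking that the inductive hypothesis propagates through both the numerator and the denominator of \eqref{def:biased-update}, which it does because each term in that formula is built from $x_i(t)$, $s_i(t)$, $d_i$, and $w_{ii}$ alone. Note that the proof does not use the assumption $b\ge 0$ in any essential way beyond ensuring that \eqref{def:biased-update} is well-defined, and it holds verbatim for any choice of initial opinions that is already constant on each island; the hypothesis $\tfrac12<x_0<1$ is not needed here but will matter for later lemmas when determining which of the three equilibrium regimes the process enters.
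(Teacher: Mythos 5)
Your proof is correct and is essentially the argument the paper intends: the paper dispenses with this lemma in one line (``follows directly from the symmetry of nodes in each set $V_1$ and $V_2$''), and your induction on $t$ is the natural way to make that symmetry claim precise. If anything, your version is slightly more careful, since it shows that only the degree counts $np_s$ and $np_d$ (not any full vertex symmetry of the graph) are needed for $s_i(t)$ and $d_i$ to coincide across an island.
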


The next observation allows us to focus on only analyzing the equilibrium opinion of nodes in $V_1$.
\begin{lem}\label{lem:comp-expectation}
Consider a node $i\in V_1$ and a node $j\in V_2$. Then, for all $t \ge 0,\ x_i(t) = 1-x_j(t)$.
\end{lem}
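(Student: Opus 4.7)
The plan is to proceed by induction on $t$, using Lemma~\ref{lem:identical-expectation} to collapse the state at each time step to just two numbers. Let $a(t)$ denote the common opinion of every node in $V_1$ and $c(t)$ the common opinion of every node in $V_2$; then the claim reduces to showing that $a(t) + c(t) = 1$ for all $t \ge 0$. The base case is immediate from the hypotheses of Theorem~\ref{thm:homophily-polarization}: $a(0) = x_0$ and $c(0) = 1-x_0$.

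For the inductive step, assume $c(t) = 1 - a(t)$. Using the two-island structure together with $w_{ii}=0$ and $w_{ij}=1$, I would compute, for any $i \in V_1$, $s_i(t) = n p_s\, a(t) + n p_d\, c(t)$ and $d_i = n(p_s + p_d)$, and symmetrically, for any $j \in V_2$, $s_j(t) = n p_s\, c(t) + n p_d\, a(t)$ and $d_j = n(p_s + p_d)$. Substituting $c(t) = 1 - a(t)$ yields the two key symmetry identities
\[
s_j(t) \;=\; d_i - s_i(t), \qquad d_j - s_j(t) \;=\; s_i(t),
\]
together with the pointwise identities $(1-c(t))^b = (a(t))^b$ and $(c(t))^b = (1-a(t))^b$.

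Plugging these into the biased update rule~\eqref{def:biased-update}, I would then compute $1 - x_j(t+1)$ directly: its numerator becomes $(1-c(t))^b(d_j - s_j(t)) = (a(t))^b\, s_i(t)$, which is exactly the numerator of $x_i(t+1)$, and its denominator, after grouping, matches the denominator of $x_i(t+1)$ term by term. Hence $x_i(t+1) = 1 - x_j(t+1)$, completing the induction.

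There is no real obstacle here beyond disciplined bookkeeping. Conceptually, the whole argument is the observation that the two-island network is invariant under swapping $V_1 \leftrightarrow V_2$, that this swap corresponds to the involution $x \mapsto 1-x$ on opinions (which is already satisfied at $t=0$), and that the biased update~\eqref{def:biased-update} is equivariant under applying $x \mapsto 1-x$ simultaneously to every opinion in the network. The main thing to watch is not to conflate $s_i(t)$ with $s_j(t)$ when $V_1$ and $V_2$ have the same cardinality but play asymmetric roles relative to a given node's perspective; the identities $s_j(t) = d_i - s_i(t)$ and $d_j - s_j(t) = s_i(t)$ make this asymmetry explicit and drive the entire calculation.
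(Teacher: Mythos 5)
Your proof is correct and follows essentially the same route as the paper: induction on $t$, with the key symmetry identity $s_j(t) = d_i - s_i(t)$ (derived from Lemma~\ref{lem:identical-expectation}, $w_{ii}=0$, and the induction hypothesis) substituted into the update rule. The only cosmetic difference is that the paper manipulates the odds ratio $x_i(t+1)/(1-x_i(t+1))$ while you match numerators and denominators of \eqref{def:biased-update} directly; both calculations are equivalent.
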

\begin{proof}[Proof of Lemma~\ref{lem:comp-expectation}]
By induction. 

Induction hypothesis: Assume that the statement holds for some $t \ge 0$. 

Base case: The statement holds for $t = 0$ by assumption in the theorem statement.

We will now show that the statement holds for $t+1$.
\beq\label{eq:islands-1}
\frac{x_i(t+1)}{1-x_i(t+1)} = \frac{(x_i(t))^b}{(1-x_i(t))^b}\frac{s_i(t)}{d_i - s_i(t)}
\eeq
where $d_i = n(p_s + p_d)$ and, by Lemma~\ref{lem:identical-expectation}, $s_i(t) = n(p_sx_i(t) + p_dx_j(t))$. On the other hand,
\beq\label{eq:islands-2}
\frac{x_j(t+1)}{1-x_j(t+1)} = \frac{(x_j(t))^b}{(1-x_j(t))^b}\frac{s_j(t)}{d_j - s_j(t)}
\eeq
where $s_j(t) = n(p_sx_j(t) + p_dx_i(t))$, and $d_j = n(p_s + p_d) = d_i$. 
By the induction hypothesis, we know that $x_i(t) = 1- x_j(t)$. It follows that $S_i(t) = d_i - s_j(t)$. Substituting this into \eqref{eq:islands-1}, we get
\[
\frac{x_i(t+1)}{1-x_i(t+1)} = \frac{(x_i(t))^b}{(1-x_i(t))^b}\frac{s_i(t)}{d_i - s_i(t)} = \frac{(1 - x_j(t))^b}{(x_j(t))^b}\frac{d_j - s_j(t)}{s_j(t)} = \frac{1 - x_j(t+1)}{x_j(t+1)} 
\]
where the last equality follows from \eqref{eq:islands-2}. It follows that $x_i(t+1) = 1-x_j(t+1)$.

This completes the inductive proof.
\end{proof}

Lemma~\ref{lem:comp-expectation} implies that if we prove the theorem statement for nodes in $V_1$, we get the proof for nodes in $V_2$ for free.
So, in the rest of the proof, we only make statements about nodes in $V_1$.
The third observation lower bounds the opinions of nodes in $V_1$.

\begin{lem}\label{lem:x-in-[0,1]}
Consider a node $i\in V_1$. For all $t\ge 0,\ x_i(t) \in [\frac12, 1]$.
\end{lem}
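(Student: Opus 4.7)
The plan is to prove the lemma by induction on $t$, using Lemma~\ref{lem:identical-expectation} and Lemma~\ref{lem:comp-expectation} to reduce the analysis to a single scalar update. The base case $t=0$ holds by assumption since $x_0 \in (\tfrac12, 1) \subset [\tfrac12, 1]$. For the inductive step, assume $x_i(t) \in [\tfrac12, 1]$ for some $i \in V_1$; by Lemma~\ref{lem:identical-expectation} every node in $V_1$ has opinion $x_i(t)$, and by Lemma~\ref{lem:comp-expectation} every node in $V_2$ has opinion $1 - x_i(t) \in [0, \tfrac12]$. Writing $x := x_i(t)$, this gives $s_i(t) = n p_s x + n p_d (1-x)$ and $d_i - s_i(t) = n p_s (1-x) + n p_d x$, and since $w_{ii} = 0$ the update reduces to
\[
x_i(t+1) = \frac{x^b\bigl(p_s x + p_d(1-x)\bigr)}{x^b\bigl(p_s x + p_d(1-x)\bigr) + (1-x)^b\bigl(p_s(1-x) + p_d x\bigr)}.
\]

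The upper bound $x_i(t+1) \le 1$ is immediate because the denominator dominates the numerator by a nonnegative additive term. For the lower bound $x_i(t+1) \ge \tfrac12$, it suffices to show that the numerator is at least the term being added in the denominator, i.e.,
\[
x^b\bigl(p_s x + p_d(1-x)\bigr) \;\ge\; (1-x)^b\bigl(p_s(1-x) + p_d x\bigr).
\]
I would verify this by comparing the two factors on each side separately. Since $x \ge 1-x \ge 0$ and $b \ge 0$, monotonicity gives $x^b \ge (1-x)^b$. For the second factor, the difference
\[
\bigl(p_s x + p_d(1-x)\bigr) - \bigl(p_s(1-x) + p_d x\bigr) = (p_s - p_d)(2x - 1)
\]
is nonnegative because $p_s > p_d$ (by definition of a two-island network) and $x \ge \tfrac12$ (by the inductive hypothesis). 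Multiplying the two term-by-term inequalities, both of whose sides are nonnegative, yields the desired inequality and hence $x_i(t+1) \ge \tfrac12$.

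There is no real obstacle here: the key observation is simply that homophily ($p_s > p_d$) aligns with the existing bias ($x \ge \tfrac12$), so each of the two multiplicative effects in the biased update — the $x^b$ weighting and the weighted neighborhood average $s_i(t)$ — individually pushes $V_1$-opinions up rather than down, and therefore their product does too. The only mild subtlety to flag is that equality $x_i(t+1) = \tfrac12$ can hold (precisely when $x = \tfrac12$), which keeps $\tfrac12$ as a fixed point of the dynamics on this symmetric configuration; this is consistent with the consensus case of Theorem~\ref{thm:homophily-polarization} and does not affect the $[\tfrac12, 1]$ confinement being proved.
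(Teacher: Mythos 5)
Your proof is correct and follows essentially the same route as the paper's: both arguments induct on $t$, use the symmetry lemmas to reduce to a scalar update, and establish the lower bound by combining the same two facts, namely $x^b \ge (1-x)^b$ and $s_i(t) \ge d_i - s_i(t)$ (the paper phrases this via the odds ratio $\frac{x_i(t+1)}{1-x_i(t+1)} \ge 1$, which is algebraically identical to your numerator-versus-denominator comparison). Your explicit note about the equality case at $x = \frac12$ is a small but accurate refinement of the paper's strict inequality claim.
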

\begin{proof}[Proof of Lemma~\ref{lem:x-in-[0,1]}]
It is easy to see that for all $t\ge 0,\ x_i(t) \le 1$. We will prove that $x_i(t) \ge \frac12$ by induction over $t$.

Base case: The statement holds for $t = 0$ by assumption in the theorem statement.

Induction hypothesis: Assume that the lemma  statement holds for some $t \ge 0$, \ie, assume that $x_i(t) \ge \frac12$ for some $t  \ge 0$.

We will show that the lemma statement holds for $t+1$.
\begin{align*}
\frac{x_i(t+1)}{1 - x_i(t+1)} &= \frac{(x_i(t))^b}{(1-x_i(t))^b}\frac{S_i(t)}{d_i - s_i(t)}\\
&\ge \frac{(x_i(t))^b}{(1-x_i(t))^b} \text{ (since $s_i(t) > d_i - s_i(t)$)}\\
&\ge 1 \text{ (since $x_i(t) \ge \frac12$ by the induction hypothesis, and $b\ge 0$)}
\end{align*}
This implies $x_i(t+1) \ge \frac12$, completing the inductive proof.
\end{proof}

Recall that $i$'s opinion at time $t+1$ is given by
\[
x_i(t+1) = \frac{(x_i(t))^bs_i(t)}{(x_i(t))^bs_i(t) + (1-x_i(t))^b(d_i - s_i(t))} \text{ (by \eqref{def:biased-update})}
\] 
where $s_i(t) = n(p_sx_i(t) + p_d(1-x_i(t)))$, and $d_i = n(p_s + p_d)$.
Now consider the equation
\beq\label{eq:x(t)=x(t+1)}
x_i(t+1) = x_i(t)
\eeq
We will show that if $b\ge 1$ or $b < \frac2{h_G+1}$, \eqref{eq:x(t)=x(t+1)} has no solution in $(\frac12, 1)$, whereas if $1 > b \ge \frac2{h_G+1}$, there exists a unique solution to \eqref{eq:x(t)=x(t+1)} in $(\frac12, 1)$.
\begin{lem}\label{lem:x(t+1)x(t)}
Consider a node $i\in V_1$. Fix $t\ge 0$. 
\begin{enumerate}[(a)]
\item
If $b\ge 1$,  for every $x_i(t)\in (\frac12,1),\ x_i(t+1) > x_i(t)$.
\item
If $1 > b \ge \frac2{h_G+1}$, there exists a unique solution, say $\hat{x}$, to Eq.\eqref{eq:x(t)=x(t+1)} in  $(\frac12,1)$.
\item
If $b < \frac2{h_G+1}$, for every $x_i(t)\in (\frac12,1),\ x_i(t+1) < x_i(t)$.
\end{enumerate}
\end{lem}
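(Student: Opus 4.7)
The plan is to reduce the three-case analysis to the zero set and sign of a single scalar function on $(\tfrac12, 1)$. By Lemma~\ref{lem:identical-expectation}, for $i \in V_1$ and $x := x_i(t)$, every $V_1$-neighbor of $i$ has opinion $x$ and every $V_2$-neighbor has opinion $1-x$, so writing $h := h_G = p_s/p_d$, the normalized neighbor opinion is
\[
S(x) := \frac{s_i(t)}{d_i} = \frac{(h-1)x+1}{h+1},
\]
and dividing \eqref{def:biased-update} by its complement gives the odds-ratio recursion
\[
\frac{x_i(t+1)}{1-x_i(t+1)} = \left(\frac{x}{1-x}\right)^{b}\frac{S(x)}{1-S(x)}.
\]
Defining $g(x) := S(x)/(1-S(x)) = ((h-1)x+1)/((h-1)(1-x)+1)$ and
\[
\phi(x) := (1-b)\log\frac{x}{1-x} - \log g(x),
\]
the recursion gives the key equivalence: $x_i(t+1) > x$, $x_i(t+1) = x$, or $x_i(t+1) < x$ according to whether $\phi(x) < 0$, $=0$, or $>0$, respectively. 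Two facts are immediate: $\phi(\tfrac12) = 0$, and for $b < 1$, $\phi(x) \to +\infty$ as $x \to 1^-$ (the first term blows up while $g(1)=h$ is bounded).

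Case (a), $b \ge 1$, falls out by inspection: for $x \in (\tfrac12, 1)$ we have $(1-b)\log(x/(1-x)) \le 0$ and $g(x) > 1$ (since $(h-1)x+1 > (h-1)(1-x)+1$ when $h > 1$ and $x > \tfrac12$), so $\phi(x) < 0$ strictly, giving $x_i(t+1) > x_i(t)$.

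For $b < 1$ I would differentiate $\phi$; using $((h-1)x+1) + ((h-1)(1-x)+1) = h+1$ in the quotient rule gives
\[
\phi'(x) = \frac{1-b}{x(1-x)} - \frac{h^2-1}{(h-1)^2\,x(1-x) + h}.
\]
The substitution $u := x(1-x)$ is strictly decreasing from $1/4$ to $0$ on $(\tfrac12, 1)$, and an elementary rearrangement shows that $\phi'(x) > 0$ is equivalent to a single linear-in-$u$ inequality $u < u^{*}$, where $u^{*} := (1-b)h/[(h-1)(2+b(h-1))]$. A short algebraic check shows $u^{*} \ge 1/4$ iff $b \le 2/(h+1)$ (consistent with $\phi'(\tfrac12) = 4[2-b(h+1)]/(h+1)$). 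Since $u$ is monotone in $x$, this forces $\phi'$ to have at most one sign change on $(\tfrac12, 1)$, and that change --- when it occurs --- is from negative to positive as $x$ grows.

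The trichotomy now drops out. In case (c), $b < 2/(h+1)$ forces $u^{*} > 1/4 > u$ throughout $(\tfrac12, 1)$, so $\phi' > 0$ and $\phi$ strictly increases from $0$; hence $\phi > 0$ on $(\tfrac12, 1)$ and $x_i(t+1) < x_i(t)$. In case (b), $1 > b \ge 2/(h+1)$ gives $u^{*} \le 1/4$, so $\phi'$ is negative just above $\tfrac12$ but becomes positive once $u$ dips below $u^{*}$; combined with $\phi(\tfrac12) = 0$ and $\phi(1^-) = +\infty$, the intermediate value theorem, applied past the minimizer of $\phi$, produces a unique zero $\hat{x} \in (\tfrac12, 1)$. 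The main obstacle is the single-crossing analysis of $\phi'$: once the threshold $u^{*}$ is identified, the critical value $2/(h+1)$ emerges automatically and the rest is a routine sign chase. The boundary value $b = 2/(h+1)$ is a mildly degenerate edge case (the putative $\hat{x}$ collapses toward $\tfrac12$), which I would treat separately by continuity.
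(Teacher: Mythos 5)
Your proposal is correct, and it reaches the conclusion by a genuinely different route than the paper. Both arguments start from the same odds-ratio recursion $\tfrac{x_i(t+1)}{1-x_i(t+1)} = \bigl(\tfrac{x}{1-x}\bigr)^{b}\tfrac{s_i(t)}{d_i-s_i(t)}$, but the paper then compares the degree of homophily $h_G$ against an auxiliary function $f(y;b) = \tfrac{y^{2-b}-(1-y)^{2-b}}{y(1-y)^{1-b}-y^{1-b}(1-y)}$, for which it must separately establish continuity at $y=\tfrac12$ (via L'H\^{o}pital), strict monotonicity on $[\tfrac12,1]$ for $0<b<1$ (via a fairly delicate algebraic manipulation), and the bound $f\le 1$ for $b\ge 1$; existence of $\hat{x}$ then comes from the intermediate value theorem and uniqueness from monotonicity of $f$. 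You instead take logarithms, define $\phi$, and run a single-crossing analysis of $\phi'$ through the substitution $u=x(1-x)$, which reduces everything to one linear-in-$u$ inequality; the threshold $2/(h_G+1)$ then appears mechanically from $u^*\ge \tfrac14$, and I have checked your formulas for $\phi'$, $u^*$, and $\phi'(\tfrac12)$ — they are all right. Your route is arguably cleaner: it avoids the removable singularity at $y=\tfrac12$ entirely and replaces the paper's monotonicity lemma for $f$ with an elementary derivative computation, at the cost of being calculus-based rather than purely algebraic. One caveat you correctly flag but do not fully resolve: at the boundary $b=2/(h_G+1)$ your analysis gives $\phi'>0$ on all of $(\tfrac12,1)$, hence $\phi>0$ there and \emph{no} fixed point in the open interval — the fixed point sits at $x=\tfrac12$. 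This is not a defect of your argument relative to the paper's: the paper's own intermediate value argument likewise only produces $\hat{y}\in[\tfrac12,1)$ and yields $\hat{y}=\tfrac12$ at that boundary value of $b$, so the statement of part (b) is equally degenerate there under either proof; it would be cleaner to state the conclusion with $\hat{x}\in[\tfrac12,1)$ or to exclude the endpoint $b=2/(h_G+1)$ rather than appeal to continuity.
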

\begin{proof}[Proof of Lemma~\ref{lem:x(t+1)x(t)}]
Consider the function $f: [0,1]\rightarrow \real$ defined as
\beq\label{def:f}
f(y; b) := \left\{
\begin{array}{rl} 
1,&\ y\in [0,1], b = 1\\
0, &\ y\in [0,1], b = 2\\
\frac2{b} - 1,&\ y = \frac12, b > 0\\
\frac{(y)^{2-b} - (1-y)^{2-b}}{y(1-y)^{1-b} - y^{1-b}(1-y)},&\ \text{ otherwise} 
\end{array}
\right.
\eeq

We will first prove a few properties of $f$ and then use those properties to prove Lemma~\ref{lem:x(t+1)x(t)}.

\begin{prop}\label{prop:f-1/2}
\begin{enumerate}
\item
For all $b > 0$, $f$ is continuous over $[0,1]$.
\item
If $0 < b <  1$, $f$ is strictly increasing over $[\frac12,1]$.
\item
If $b \ge 1$, for all $y\in [0,1), f(y; b) \le 1$.
\end{enumerate}
\end{prop}
\begin{proof}
\begin{enumerate}
\item
Observe that $f$ is continuous when $b=1$ or $b=2$.
So, we only need to show that $f$ is continuous at $y = \frac12$ when $b\ne 1$ and $b\ne 2$. 
Let $p(y; b) := (y)^{2-b} - (1-y)^{2-b}$ and $q(y; b) := y(1-y)^{1-b} - y^{1-b}(1-y)$. Observe that when $b\ne 1$ and $b\ne 2$,  both $p$ and $q$ are differentiable on $[0,1]$. For $y\in [0,1]$,
\[
p'(y; b) = (2-b)(y^{1-b} + (1-y)^{1-b}); q'(y; b) = (1-y)^{1-b} - (1-b)y(1-y)^{-b} - (1-b)y^{-b}(1-y) + y^{1-b}
\]
Therefore,
\beq\label{eq:propf12.1.1}
\lim_{y\rightarrow 1/2}\frac{p'(y; b)}{q'(y; b)} = \lim_{y\rightarrow 1/2} \frac{(2-b)(y^{1-b} + (1-y)^{1-b})}{(1-y)^{1-b} - (1-b)y(1-y)^{-b} - (1-b)y^{-b}(1-y) + y^{1-b}} = \frac2{b} - 1
\eeq
So, we have that
\begin{align*}
\lim_{y \rightarrow 1/2}f\left(y; b\right) &= \lim_{y\rightarrow 1/2} \frac{p(y; b)}{q(y; b)} = \lim_{y\rightarrow 1/2} \frac{p'(y)}{q'(y)} \text{ (using L'H\^{o}pital's rule)} = \frac2b - 1 \text{ (from \eqref{eq:propf12.1.1}) } = f(\frac12; b)
\end{align*}
Therefore, when $b\ne 1$ and $b\ne 2$, $f$ is continuous at $\frac12$. 
\item
Assume $0 < b < 1$. Fix $y_1, y_2\in [\frac12,1]$ such that $y_1 > y_2$. We will show that $f(y_1; b) > f(y_2); b$.
For conciseness of expression, define $\bar{y_1} := 1-y_1$ and $\bar{y_2} := 1 - y_2$.
Then
\beq\label{eq:propf12.1}
y_1y_2 - y_1\bar{y_2}  > (y_1y_2)^{1-b} - (y_1\bar{y_2})^{1-b}
\eeq
Similarly,
\beq\label{eq:propf12.2}
\bar{y_1}y_2 - \bar{y_1}\bar{y_2}   > (\bar{y_1} y_2)^{1-b} - (\bar{y_1} \bar{y_2} )^{1-b}
\eeq
Adding \eqref{eq:propf12.1} and \eqref{eq:propf12.2}, we get
\[
y_1y_2 - y_1\bar{y_2}  + \bar{y_1}y_2 - \bar{y_1} \bar{y_2} > (y_1y_2)^{1-b} - (y_1\bar{y_2})^{1-b} + (\bar{y_1} y_2)^{1-b} - (\bar{y_1} \bar{y_2} )^{1-b}
\]
Or equivalently,
\beq\label{eq:propf12.3}
\left(y_1y_2 - \bar{y_1} \bar{y_2}\right) - \left((y_1y_2)^{1-b}  - (\bar{y_1}\bar{y_2})^{1-b}\right) > \left(y_1\bar{y_2}  - \bar{y_1} y_2\right) - \left((y_1\bar{y_2} )^{1-b} - (\bar{y_1} y_2)^{1-b}\right)
\eeq
Moreover, since $y_1, y_2\in [\frac12,1]$ and $y_1 > y_2$,
\beq\label{eq:propf12.4}
y_1y_2 - \bar{y_1} \bar{y_2} > 0; (y_1y_2)^{1-b}  - (\bar{y_1}\bar{y_2})^{1-b} > 0; y_1\bar{y_2}  - \bar{y_1} y_2 > 0; (y_1\bar{y_2} )^{1-b} - (\bar{y_1} y_2)^{1-b} > 0
\eeq
\eqref{eq:propf12.3} and \eqref{eq:propf12.4} imply that
\[
\frac{y_1y_2 - \bar{y_1}\bar{y_2}}{y_1\bar{y_2} - \bar{y_1}y_2} > \frac{(y_1y_2)^{1-b}  - (\bar{y_1}\bar{y_2})^{1-b}}{(y_1\bar{y_2})^{1-b} - (\bar{y_1}y_2)^{1-b}}
\]
Rearranging, we get
\[
\frac{(y_1)^{2-b} - \bar{y_1}^{2-b}}{y_1\bar{y_1}^{1-b} - y_1^{1-b}\bar{y_1}} = f(y_1; b) > \frac{(y_2)^{2-b} - \bar{y_2}^{2-b}}{y_2\bar{y_2}^{1-b} - y_2^{1-b}\bar{y_2}} = f(y_2;b)
\]
\item
Since $f$ is symmetric about $y = \frac12$, we will prove the theorem for $y\in [\frac12,1)$. Fix $y\in [\frac12,1)$. Observe that when $b\ge 1$, $(1-y)^{1-b} \ge y^{1-b}$ (since $y \ge 1-y$). Equivalently
\beq\label{eq:lem4.5.1}
y(1-y)^{1-b} \ge y^{2-b}
\eeq
For the same reason,
\beq\label{eq:lem4.5.2}
y^{1-b}(1-y) \le (1-y)^{2-b}
\eeq
From \eqref{eq:lem4.5.1} and \eqref{eq:lem4.5.2}, it follows that
\[
y(1-y)^{1-b} - y^{1-b}(1-y) \ge (y)^{2-b} - (1-y)^{2-b}
\]
or equivalently, $f(y; b) \le 1$.
\end{enumerate}
\end{proof}

Using these properties of $f$ we will prove Lemma~\ref{lem:x(t+1)x(t)}.
\begin{enumerate}
\item
If $b \ge 1$, then for all $y\in [0,1),\ f(y; b) \le 1 \text{ (by Proposition~\ref{prop:f-1/2})}  < h_G$. Therefore, for $y\in [\frac12, 1)$,
\begin{align*}
 \frac{(y)^{2-b} - (1-y)^{2-b}}{y(1-y)^{1-b} - y^{1-b}(1-y)}  &< h_G\\
 \Leftrightarrow y^{2-b}  - (1-y)^{2-b} &< h_G(y(1-y)^{1-b} - y^{1-b}(1-y))\\
 \Leftrightarrow y^{2-b} + h_Gy^{1-b}(1-y) &< (1-y)^{2-b} + h_Gy(1-y)^{1-b}\\
 \Leftrightarrow y^{1-b} (y + (1-y)h_G) &< (1-y)^{1-b}((1-y) + h_Gy)\\
 \Leftrightarrow \frac{y}{1-y} &< \left(\frac{y}{1-y}\right)^b\cdot\frac{(1-y) + h_Gy}{y + (1-y)h_G}
 \end{align*}
For $y = x_i(t)$, the right hand side of the last inequality above is equal to $x_i(t+1)/(1-x_i(t+1))$, implying that $x_i(t+1) > x_i(t)$.
\item
If $1 > b \ge \frac2{h_G+1}$, then observe that $f(\frac12; b) = \frac2{b} - 1 \le h_G < f(1;b) = \infty$. Since $f$ is a continuous function (by Proposition~\ref{prop:f-1/2}), therefore, by the intermediate value theorem, there must exist a $\hat{y}\in [\frac12,1)$ such that $f(\hat{y}; b) = h_G$.
Equivalently,
\[
 \frac{(\hat{y})^{2-b} - (1-\hat{y})^{2-b}}{\hat{y}(1-\hat{y})^{1-b} - \hat{y}^{1-b}(1-\hat{y})}  = h_G
\]
Rearranging the above expression, we get
\[
\frac{\hat{y}}{1-\hat{y}} = \left(\frac{\hat{y}}{1-\hat{y}}\right)^b\cdot\frac{(1-\hat{y}) + h_G\hat{y}}{\hat{y} + (1-\hat{y})h_G}
\]
Again, for $\hat{y} = x_i(t)$, we have that $ x_i(t+1) = x_i(t)$. The uniqueness of $\hat{x}$ follows from the fact that, by Proposition~\ref{prop:f-1/2}, $f$ is strictly increasing over $(\frac12,1]$.
\item
If $b < \frac2{h_G+1}$, then for all $y\in [\frac12,1],\ f(y;b) \ge f(\frac12; b) \text{ (by Proposition~\ref{prop:f-1/2})} = \frac2{b} - 1 > h_G$.
In other words,
\[
 \frac{(y)^{2-b} - (1-y)^{2-b}}{y(1-y)^{1-b} - y^{1-b}(1-y)}  > h_G
 \]
Again, rearranging the above expression, we get
\[
\frac{y}{1-y} > \left(\frac{y}{1-y}\right)^b\cdot\frac{(1-y) + h_Gy}{y + (1-y)h_G}\\
\] 
Again, for $y = x_i(t) $, the right hand side of the last inequality above is equal to $x_i(t+1)$, implying that $x_i(t+1) < x_i(t)$.
\end{enumerate}
This concludes the proof of Lemma~\ref{lem:x(t+1)x(t)}.
\end{proof}

Next we will prove Theorem~\ref{thm:homophily-polarization} for the case of persistent disagreement, the cases of polarization and consensus are limiting cases of that case as $b\rightarrow 1$ and $b\rightarrow 2/(h_G+1)$ respectively.
We will show that when $1 > b \ge \frac2{h_G+1}$, the value $\hx$ defined in Lemma~\ref{lem:x(t+1)x(t)}(b) is a stable equilibrium.
The other two cases can be formally proven using an argument similar to the one below.
Next we will show that when $1 > b \ge \frac2{h_G+1}$, the sequence $\{x_i(t)\}$ is bounded.

\begin{lem}\label{lem:bounded-opinion}
Consider a node $i\in V_1$. Let $1 > b \ge  \frac2{h_G+1}$. Let $\hat{x}\in (\frac12,1)$ be the solution to \eqref{eq:x(t)=x(t+1)}.
\begin{enumerate}
\item
If $x_0 < \hx$, then for all $t> 0,\ x_i(t) < \hx$.
\item
If $x_0 > \hx$, then for all $t> 0,\ x_i(t) > \hx$.
\end{enumerate}
\end{lem}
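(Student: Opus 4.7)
The plan is to prove both parts by induction on $t$. The key structural observation is that in a two-island network, Lemma~\ref{lem:identical-expectation} implies all nodes in $V_1$ share a common opinion and Lemma~\ref{lem:comp-expectation} implies the common opinion of nodes in $V_2$ is its reflection about $\frac12$. Substituting $s_i(t) = n\bigl(p_s x_i(t) + p_d(1-x_i(t))\bigr)$ and $d_i - s_i(t) = n\bigl(p_s(1-x_i(t)) + p_d x_i(t)\bigr)$ into \eqref{def:biased-update}, the update becomes $x_i(t+1) = F(x_i(t))$ for the single-variable map
\[
F(y) := \frac{y^b\bigl(p_s y + p_d(1-y)\bigr)}{y^b\bigl(p_s y + p_d(1-y)\bigr) + (1-y)^b\bigl(p_s(1-y) + p_d y\bigr)}.
\]
Since Lemma~\ref{lem:x(t+1)x(t)}(b) already gives $F(\hx) = \hx$, the induction reduces to showing that $F\bigl([\tfrac12, \hx)\bigr) \subseteq [\tfrac12, \hx)$ and $F\bigl((\hx, 1]\bigr) \subseteq (\hx, 1]$; applied at the induction hypothesis $x_i(t)$ on the appropriate side of $\hx$, this yields the desired $x_i(t+1)$.

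The main step is to prove that $F$ is strictly increasing on $[\tfrac12, 1]$. Writing $F(y) = 1/\bigl(1 + B(y)/A(y)\bigr)$ with $A(y) := y^b\bigl(p_d + (p_s-p_d)y\bigr)$ and $B(y) := (1-y)^b\bigl(p_s - (p_s-p_d)y\bigr)$, the function $A$ is the product of two strictly positive, strictly increasing factors on $(\tfrac12, 1]$ (using $b > 0$ and $p_s > p_d$), while $B$ is the product of two strictly positive, strictly decreasing factors on $[\tfrac12, 1)$. Hence $A$ is strictly increasing and $B$ is strictly decreasing, so $B/A$ is strictly decreasing and $F$ is strictly increasing.

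Combining strict monotonicity with $F(\hx) = \hx$ closes the induction: if $x_i(t) < \hx$, then $x_i(t+1) = F(x_i(t)) < F(\hx) = \hx$, and symmetrically if $x_i(t) > \hx$. Lemma~\ref{lem:x-in-[0,1]} guarantees that $x_i(t)$ remains in $[\tfrac12, 1]$ throughout, so the monotonicity argument applies at every step. I do not expect any significant obstacle here; the only delicate point is that the factorizations of $A$ and $B$ into two strictly monotone factors depend crucially on the hypotheses $p_s > p_d$ and $b > 0$.
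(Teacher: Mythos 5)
Your proposal is correct and follows essentially the same route as the paper: an induction on $t$ whose inductive step rests on the one-step update being monotone increasing in the current opinion together with the fixed-point identity $F(\hx)=\hx$ from Lemma~\ref{lem:x(t+1)x(t)}(b). The paper phrases the monotonicity via the odds ratio $\tfrac{x_i(t+1)}{1-x_i(t+1)} = \tfrac{(x_i(t))^b}{(1-x_i(t))^b}\cdot\tfrac{s_i(t)}{d_i-s_i(t)}$, bounding each factor by its value at $\hx$, which is the same observation as your $A$ increasing / $B$ decreasing decomposition.
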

\begin{proof}[Proof of Lemma~\ref{lem:bounded-opinion}]
We will prove statement (1). Statement (2) can be proven using a similar argument.

Proof by induction.

Induction hypothesis: Assume that the lemma  statement holds for some $t \ge 0$, \ie, assume that $x_i(t) < \hx$ for some $t  \ge 0$.

Base case: The statement holds for $t = 0$ by assumption.

We will show that the lemma statement holds for $t+1$.
\[
\frac{x_i(t+1)}{1 - x_i(t+1)} = \frac{(x_i(t))^b}{(1-x_i(t))^b}\frac{s_i(t)}{d_i - s_i(t)} < \frac{(\hx)^b}{(1-\hx)^b}\frac{s_i(t)}{d_i - s_i(t)} \text{ (since $\frac12 < x_i(t) < \hx$, and $b > 0$)}
\]
Observe that since $x_i(t) < \hx$ and $p_s > p_d,\ s_i(t) =  n(p_sx_i(t) + p_d(1-x_i(t))) <  n(p_s\hx + p_d(1-\hx))$. Therefore, 
\[
\frac{s_i(t)}{d_i - s_i(t)} < \frac{p_s\hx + p_d(1-\hx)}{p_s(1-\hx) + p_d\hx}
\]
As a result,
\[
\frac{x_i(t+1)}{1 - x_i(t+1)} < \frac{(\hx)^b}{(1-\hx)^b}\frac{p_s\hx + p_d(1-\hx)}{p_s(1-\hx) + p_d\hx} = \frac{\hx}{1-\hx} \text{ (by definition of $\hx$)}
\]
This implies $x_i(t+1) < \hx$. 
This completes the inductive proof.
\end{proof}

Next we will prove that when $1 > b \ge \frac2{h_G+1}$, the sequence $\{x_i(t)\}$ is monotone.

\begin{lem}\label{lem:monotone-opinion}
Consider a node $i\in V_1$. Let $1 > b \ge \frac2{h_G+1}$. Let $\hat{x}\in (\frac12,1)$ be the solution to \eqref{eq:x(t)=x(t+1)}.
\begin{enumerate}
\item
If $x_0 < \hx$, the sequence $\{x_i(t)\}$ is strictly increasing.
\item
If $x_0 > \hx$, the sequence $\{x_i(t)\}$ is strictly decreasing.
\end{enumerate}
\end{lem}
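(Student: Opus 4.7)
The plan is to proceed by induction on $t$, using the strict monotonicity of the auxiliary function $f(\cdot; b)$ introduced in the proof of Lemma~\ref{lem:x(t+1)x(t)}, combined with the invariant that $x_i(t)$ stays on the correct side of $\hx$ (Lemma~\ref{lem:bounded-opinion}). I will sketch case (1); case (2) is symmetric, with every inequality reversed.

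First I confine $x_i(t)$ to the open interval $(\tfrac12, \hx)$ for all $t \ge 0$. The upper bound $x_i(t) < \hx$ is immediate from Lemma~\ref{lem:bounded-opinion}(1). For the strict lower bound, the identity
\[
\frac{x_i(t+1)}{1 - x_i(t+1)} \;=\; \frac{(x_i(t))^b}{(1-x_i(t))^b} \cdot \frac{s_i(t)}{d_i - s_i(t)}
\]
used in the proof of Lemma~\ref{lem:x-in-[0,1]} shows that whenever $x_i(t) > \tfrac12$, both factors on the right exceed $1$ (since $p_s > p_d$ forces $s_i(t) > d_i - s_i(t)$ once $x_i(t) > \tfrac12$, and $b > 0$), so $x_i(t+1) > \tfrac12$ as well. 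Combined with $x_0 > \tfrac12$, this gives the desired confinement.

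Next I leverage the function $f(\cdot; b)$ from \eqref{def:f}. Since $0 < b < 1$, Proposition~\ref{prop:f-1/2}(2) asserts that $f(\cdot; b)$ is strictly increasing on $[\tfrac12, 1]$. By the definition of $\hx$ in Lemma~\ref{lem:x(t+1)x(t)}(b), $f(\hx; b) = h_G$. Therefore $x_i(t) < \hx$ implies
\[
f(x_i(t); b) \;=\; \frac{x_i(t)^{2-b} - (1-x_i(t))^{2-b}}{x_i(t)(1-x_i(t))^{1-b} - x_i(t)^{1-b}(1-x_i(t))} \;<\; h_G.
\]
Now I replay the algebraic chain from part (c) of the proof of Lemma~\ref{lem:x(t+1)x(t)} with the inequality reversed: the above yields
\[
\frac{x_i(t)}{1 - x_i(t)} \;<\; \left(\frac{x_i(t)}{1-x_i(t)}\right)^{\!b}\cdot \frac{(1 - x_i(t)) + h_G\, x_i(t)}{x_i(t) + (1 - x_i(t))\, h_G},
\]
and, using $s_i(t) = n(p_s x_i(t) + p_d(1-x_i(t)))$, $d_i = n(p_s + p_d)$, and $h_G = p_s/p_d$, the right-hand side equals $x_i(t+1)/(1 - x_i(t+1))$. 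Hence $x_i(t+1) > x_i(t)$, completing the inductive step for case (1).

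Case (2) is identical except that Lemma~\ref{lem:bounded-opinion}(2) traps $x_i(t)$ in $(\hx, 1)$, strict monotonicity of $f$ then gives $f(x_i(t); b) > h_G$, and the same algebra (now with the opposite strict inequality throughout) yields $x_i(t+1) < x_i(t)$. I do not expect any real technical obstacle: the entire argument is essentially a bookkeeping exercise tying together three facts already established, with the only subtle point being the verification that $x_i(t)$ stays strictly above $\tfrac12$ so that Proposition~\ref{prop:f-1/2}(2) applies with a strict comparison.
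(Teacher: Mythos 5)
Your proposal is correct and follows essentially the same route as the paper's proof: confine $x_i(t)$ below (resp.\ above) $\hx$ via Lemma~\ref{lem:bounded-opinion}, invoke the strict monotonicity of $f(\cdot;b)$ from Proposition~\ref{prop:f-1/2} to get $f(x_i(t);b) < f(\hx;b) = h_G$, and rearrange exactly as in Lemma~\ref{lem:x(t+1)x(t)} to conclude $x_i(t+1) > x_i(t)$. The extra verification that $x_i(t)$ stays above $\tfrac12$ is harmless but not needed, since $f$ is strictly increasing on all of $[\tfrac12,1]$ and Lemma~\ref{lem:x-in-[0,1]} already keeps the iterates in $[\tfrac12,1]$.
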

\begin{proof}[Proof of Lemma~\ref{lem:monotone-opinion}]
We will prove statement (1); statement (2) can be proven using a similar argument.

Assume $x_0 < \hx$. Then, from Lemma~\ref{lem:bounded-opinion}, we know that for all $t\ge 0, x_i(t) < \hx$. 
Fix $t\ge 0$. 
Let $x_i(t) = y < \hx$.
Recall that by definition of $\hx$, if $x_i(t) = \hx,\ x_i(t+1) = x_i(t)$. Equivalently, $f(\hx;b) = h_G$, where $f$ is defined by \eqref{def:f}. From Proposition~\ref{prop:f-1/2}, we know that $f$ is strictly increasing over the interval $(\frac12,\hx)$. Therefore, $f(y;b) < f(\hx;b) = h_G$. Equivalently,
\[
\frac{(y)^{2-b} - (1-y)^{2-b}}{y(1-y)^{1-b} - y^{1-b}(1-y)} < h_G
\]
Rearranging, we get
\[
\frac{y}{1-y} < \left(\frac{y}{1-y}\right)^b\cdot\frac{(1-y) + h_Gy}{y + (1-y)h_G} = \frac{x_i(t+1)}{1-x_i(t+1)}
\]
Equivalently, $x_i(t+1) > x_i(t)$.
\end{proof}

Using the fact that the sequence $\{x_i(t)\}$ is monotone and bounded, next we will prove that it converges to $\hx$.
\begin{lem}\label{lem:persistent-disagreement}
Consider a node $i\in V_1$. Let $1 > b \ge \frac2{h_G+1}$. Let $\hat{x}\in (\frac12,1)$ be the solution to \eqref{eq:x(t)=x(t+1)}. Then, $\lim_{t\rightarrow\infty} x_i(t) = \hx$.
\end{lem}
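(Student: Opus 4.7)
My plan is to combine Lemmas~\ref{lem:bounded-opinion} and~\ref{lem:monotone-opinion} with the Monotone Convergence Theorem, and then identify the limit via continuity of the update together with the uniqueness of the interior fixed point established in Lemma~\ref{lem:x(t+1)x(t)}(b). Split into three cases based on the position of $x_0$ relative to $\hx$. If $x_0 = \hx$, the update equation gives $x_i(t) = \hx$ for all $t$, so the claim is immediate. If $x_0 < \hx$, Lemma~\ref{lem:monotone-opinion}(1) gives that $\{x_i(t)\}$ is strictly increasing, and Lemma~\ref{lem:bounded-opinion}(1) gives that it is bounded above by $\hx$; by the Monotone Convergence Theorem, the sequence converges to some limit $L \in [x_0, \hx]$. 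The case $x_0 > \hx$ is analogous and yields a limit $L \in [\hx, x_0]$.

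In both nontrivial cases, $L \in (\tfrac12, 1)$ (strictly, because the sequence starts above $\tfrac12$ and, in the increasing case, $L \ge x_0 > \tfrac12$, while in the decreasing case, $L \ge \hx > \tfrac12$). Next I would introduce the update map
\[
F(y) := \frac{y^b\bigl(p_s y + p_d(1-y)\bigr)}{y^b\bigl(p_s y + p_d(1-y)\bigr) + (1-y)^b\bigl(p_s(1-y) + p_d y\bigr)},
\]
so that $x_i(t+1) = F(x_i(t))$ by \eqref{def:biased-update} together with Lemmas~\ref{lem:identical-expectation} and~\ref{lem:comp-expectation} (which reduce $s_i(t)$ to a function of $x_i(t)$ alone). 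The denominator is bounded away from zero on $[\tfrac12, 1]$, so $F$ is continuous there. Passing to the limit in $x_i(t+1) = F(x_i(t))$ yields $L = F(L)$.

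Finally, Lemma~\ref{lem:x(t+1)x(t)}(b) asserts that $\hx$ is the unique solution of $F(y) = y$ in $(\tfrac12, 1)$. Since $L \in (\tfrac12, 1)$ is such a solution, we conclude $L = \hx$, completing the argument.

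\textbf{Main obstacle.} There is no substantive analytical difficulty here; the proof is essentially a boilerplate application of the Monotone Convergence Theorem plus a continuity argument. The only point that requires care is ruling out that the limit $L$ equals the boundary fixed point $y = \tfrac12$ (since $F(\tfrac12) = \tfrac12$ is also a fixed point), which is handled by noting that the sequence is strictly monotone with $x_0 > \tfrac12$, so the limit stays strictly above $\tfrac12$ and must therefore coincide with the unique interior fixed point $\hx$ given by Lemma~\ref{lem:x(t+1)x(t)}(b).
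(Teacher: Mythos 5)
Your proposal is correct and follows essentially the same route as the paper: both rest on Lemmas~\ref{lem:bounded-opinion} and~\ref{lem:monotone-opinion} to obtain a strictly monotone, bounded sequence and hence a limit $L$, and both must rule out limits other than $\hx$. The only difference is in how $L$ is identified: you pass to the limit in $x_i(t+1)=F(x_i(t))$ via continuity of $F$ and invoke uniqueness of the interior fixed point from Lemma~\ref{lem:x(t+1)x(t)}(b), whereas the paper argues by contradiction that the increment $g(y)=F(y)-y$ is bounded below by a positive constant on $[x_0,a]$ for any putative limit $a<\hx$ — an equivalent mechanism, with your packaging being the more standard one.
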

\begin{proof}
For the proof, we will assume that the initial opinion $x_i(0) = x_0 \le \hx$. The case when $x_0 > \hx$ can be argued in an analogous way.

Observe that if $x_0 = \hat{x}$, then by Lemma~\ref{lem:x(t+1)x(t)}, it follows that for all $t\ge 0,\ x_i(t+1) = \hat{x}$, and we are done. 
So let us assume that $\frac12 < x_0 < \hat{x}$.
From Lemma~\ref{lem:bounded-opinion} and Lemma~\ref{lem:monotone-opinion}, we know that the sequence $\{x_i(t)\}$ is strictly increasing and bounded. 
This implies that the sequence must converge either to $\hx$ or to some value in the interval $[x_0, \hx)$.
Consider the function $g: [0,1]\rightarrow \real$ defined as
\[
g(y) := \frac{y^b(h_Gy + (1-y))}{y^b(h_Gy + (1-y) + (1-y)^b(h_G(1-y) + y)} - y
\]
Observe that for all $t\ge 0,\ x_i(t+1) - x_i(t) = g(x_i(t))$. 
Therefore, 
\begin{enumerate}[(a)]
\item
for all $y\in (\frac12,\hx),\ g(y) > 0$ (since, by Lemma~\ref{lem:monotone-opinion}, the sequence $\{x_i(t)\}$ is strictly increasing), and
\item
$g(\hx) = 0$ (by definition of $\hx$).
\end{enumerate}
For the purpose of contradiction, assume that $\lim_{t\rightarrow \infty} x_i(t) = a$, where $x_0 \le a < \hx$. 
This implies, for every $\e > 0$, there exists a $t(\e)$ such that for all $t \ge t(\e),\ x_i(t+1) - x_i(t) < \e$, or equivalently, that for all $t \ge t(\e),\ g(x_i(t)) < \e$.

Let $\min_{y\in[x_0,a]} g(y) = c$. It implies for all $y\in [x_0, a],\ g(y) \ge c$. From (a), it follows that $c > 0$. Setting $\e = c$, our analysis implies the following two properties of $g$: (1) for all $t\ge 0, g(x_i(t)) \ge c$, and (2) for all $t \ge t(\e), g(x_i(t)) < c$, which contradict each other. This completes the proof by contradiction.
\end{proof}

This completes the proof of Theorem~\ref{thm:homophily-polarization}.

\section{Proofs of Section 5}\label{app:proofs-sec5}
\begin{proof}[Proof of Theorem~\ref{thm:salsa-icf-n->infty}]
\begin{lem}\label{lem:salsa}
In the limit as $n\rightarrow\infty$, SimpleSALSA  is polarizing with respect to $i$ if and only if $i$ is biased.
\end{lem}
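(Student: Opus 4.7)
The plan is to track the quantity $p := \prob[\text{the SimpleSALSA walk ends at a }\ttr\text{ book} \mid \text{start at }i]$ as a function of $x_i$ using Lemma~\ref{lem:ssln}, and then apply Bayes' rule separately under the biased and unbiased acceptance rules. The walk has the form $i \to j_1 \in V_2 \to i' \in V_1 \to j \in V_2$, so I would split by the color of $j_1$ and write $p = x_i q_{\ttr} + (1-x_i) q_{\ttb}$, where $q_{\ttr}$ is the probability the walk ends at a $\ttr$ book given that $j_1$ is $\ttr$, and analogously for $q_{\ttb}$. The first-step weights $x_i$ and $1-x_i$ follow from Lemma~\ref{lem:ssln}(a,b,c), which give $i$'s limiting neighborhood composition.

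Next I would evaluate $q_{\ttr}$ from the bipartite structure. Conditional on the walk being at $j_1$ with $j_1$ of color $\ttr$, step two chooses a uniformly random $i' \in N(j_1)$ and step three a uniformly random neighbor of $i'$, so for any fixed destination $j \in V_2$ the probability of ending at $j$ is asymptotically $M_{j_1 j}/(|N(j_1)| \cdot k)$, using $|N(i')| \to k$ from Lemma~\ref{lem:ssln}(a). Summing over the $n$ $\ttr$ books and invoking Lemma~\ref{lem:ssln}(d,e),
\[
q_{\ttr} \;\to\; n \cdot \frac{m k^2(\tfrac{1}{4} + \var(x_1))/n^2}{(mk/(2n)) \cdot k} \;=\; \tfrac{1}{2} + 2V,
\]
where $V := \var(x_1)$; the parallel calculation using Lemma~\ref{lem:ssln}(g) gives $q_{\ttb} = \tfrac{1}{2} - 2V$. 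Combining, $p = \tfrac{1}{2} + 2V(2x_i - 1)$.

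With $p$ in hand the two directions are short. In the unbiased case acceptance is color-independent, so $\prob[\text{recommended is }\ttr \mid \text{accept}] = p$. Since $g$ is a density on $[0,1]$, it has no atoms, so $V < \tfrac{1}{4}$ strictly; hence $p - x_i = (2V - \tfrac{1}{2})(2x_i - 1)$ has the \emph{opposite} sign of $2x_i-1$, so $p < x_i$ when $x_i > \tfrac{1}{2}$ and $p > x_i$ when $x_i < \tfrac{1}{2}$, violating both polarization inequalities. In the biased case Bayes gives
\[
\prob[\text{recommended is }\ttr \mid \text{accept}] \;=\; \frac{p x_i}{p x_i + (1-p)(1-x_i)},
\]
and a routine manipulation reduces the inequality $\prob[\cdot\mid\text{accept}] > x_i$ to $p > \tfrac{1}{2}$, i.e., to $V(2x_i - 1) > 0$. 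Since $V > 0$ by assumption, this holds exactly when $x_i > \tfrac{1}{2}$, and the mirror computation handles $x_i < \tfrac{1}{2}$, establishing polarization.

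The main obstacle is bookkeeping around the limit: the quantity $p$ is a function of a random graph, and I must argue that the almost-sure limits in Lemma~\ref{lem:ssln} can be plugged simultaneously into the ratio appearing in $q_{\ttr}$ (with $n$-many terms being summed) to yield the claimed deterministic limit for $p$. This requires checking that the denominator $|N(j_1)|$ is bounded away from zero in the limit and that the summation $\sum_{j\text{ is }\ttr} M_{j_1 j}$ itself concentrates; both are again consequences of the Strong Law, since $m = f(n) \to \infty$. Once this limit interchange is justified, the remaining algebra is elementary.
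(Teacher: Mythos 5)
Your proposal is correct and follows essentially the same route as the paper: the same three-step-walk decomposition by the color of the intermediate book $j_1$, the same application of Lemma~\ref{lem:ssln} to get $p_r \rightarrow x_i(\tfrac12 + 2\var(x_1)) + (1-x_i)(\tfrac12 - 2\var(x_1))$, and the same Bayes-rule comparison in the biased versus unbiased cases (your reduction of the biased inequality to $p_r > \tfrac12$ is exactly the paper's step). The only differences are cosmetic — you treat $x_i < \tfrac12$ explicitly rather than by symmetry, and you flag the concentration/limit-interchange issue that the paper dispatches with a one-line appeal to the Strong Law.
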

\begin{proof}
Assume without loss of generality that $x_i > \frac12$.

Let $p_r$ be the probability that SimpleSALSA recommends a \ttr\ book. The proof consists of two steps: first we show that $p_r > \frac12$ and $p_r \le x_i$, and then we show that if $p_r > \frac12$ and $p_r \le x_i$, SimpleSALSA is polarizing with respect to $i$ if and only if $i$ is biased.
\begin{align*}
p_r &= \sum_{j\in V_2: j_2\text{ is \ttr}}\prob[i\xrightarrow{3} j]\\
&= \sum_{\substack{j_1\in N(i)\\j_1 \text{ is \ttr}}} \prob[i\xrightarrow{1} j_1] \sum_{\substack{j\in V_2\\j\text{ is \ttr}}} \prob[j_1\xrightarrow{2} j] + \sum_{\substack{j_2\in N(i)\\j_2 \text{ is \ttb}}} \prob[i\xrightarrow{1} j_2] \sum_{\substack{j\in V_2\\j\text{ is \ttr}}} \prob[j_2\xrightarrow{2} j]\\
&= \sum_{\substack{j_1\in N(i)\\j_1 \text{ is \ttr}}} \frac1{|N(i)|}\sum_{\substack{j\in V_2\\j\text{ is \ttr}}} \prob[j_1\xrightarrow{2} j] + \sum_{\substack{j_2\in N(i)\\j_2 \text{ is \ttb}}} \frac1{|N(i)|} \sum_{\substack{j\in V_2\\j\text{ is \ttr}}} \prob[j_2\xrightarrow{2} j]\\
&= \sum_{\substack{j_1\in V_2\\j_1 \text{ is \ttr}}} \frac{Z_{ij_1}}{|N(i)|}\sum_{\substack{j\in V_2\\j\text{ is \ttr}}} \prob[j_1\xrightarrow{2} j] + \sum_{\substack{j_2\in V_2\\j_2 \text{ is \ttb}}}  \frac{Z_{ij_2}}{|N(i)|} \sum_{\substack{j\in V_2\\j\text{ is \ttr}}} \prob[j_2\xrightarrow{2} j]\\
&= \sum_{\substack{j_1\in V_2\\j_1 \text{ is \ttr}}} \frac{Z_{ij_1}}{|N(i)|}\sum_{\substack{j\in V_2\\j\text{ is \ttr}}} \sum_{i'\in N(j_1)\cap N(j)} \frac{1}{|N(j_1)|}\frac{1}{|N(i')|} + \sum_{\substack{j_2\in V_2\\j_2 \text{ is \ttb}}}  \frac{Z_{ij_2}}{|N(i)|} \sum_{\substack{j\in V_2\\j\text{ is \ttr}}} \sum_{i'\in N(j_2)\cap N(j)} \frac{1}{|N(j_2)|}\frac{1}{|N(i')|} \\
&= \sum_{\substack{j_1\in V_2\\j_1 \text{ is \ttr}}} \frac{Z_{ij_1}}{|N(i)|}\sum_{\substack{j\in V_2\\j\text{ is \ttr}}} \sum_{i'\in V_1} \frac{Z_{i'j_1}Z_{i'j}}{|N(j_1)||N(i')|} + \sum_{\substack{j_2\in V_2\\j_2 \text{ is \ttb}}}  \frac{Z_{ij_2}}{|N(i)|} \sum_{\substack{j\in V_2\\j\text{ is \ttr}}} \sum_{i'\in V_1} \frac{Z_{i'j_2}Z_{i'j}}{|N(j_2)||N(i')|} \end{align*}
By Lemma~\ref{lem:ssln}, in the limit as $n\rightarrow\infty$, with probability 1,
\[
\sum_{\substack{j_1\in V_2\\j_1 \text{ is \ttr}}} \frac{Z_{ij_1}}{|N(i)|}\sum_{\substack{j\in V_2\\j\text{ is \ttr}}} \sum_{i'\in V_1} \frac{Z_{i'j_1}Z_{i'j}}{|N(j_1)||N(i')|}  \rightarrow x_i \frac{1}{k \cdot mk/2n}n\frac{mk^2(\frac14 + \var(x_1))}{n^2} = x_i\left(\frac12 + 2\var(x_1)\right)
\]
and
\[
 \sum_{\substack{j_2\in V_2\\j_2 \text{ is \ttb}}}  \frac{Z_{ij_1}}{|N(i)|} \sum_{\substack{j\in V_2\\j\text{ is \ttr}}} \sum_{i'\in V_1} \frac{Z_{i'j_2}Z_{i'j}}{|N(j_2)||N(i')|} \rightarrow (1-x_i) \frac{1}{k \cdot mk/2n}n\frac{mk^2(\frac14 - \var(x_1))}{n^2} = (1-x_i)\left(\frac12 - 2\var(x_1)\right)
\]
Therefore, in the limit as $n\rightarrow\infty$, with probability 1,
\[
p_r \rightarrow x_i\left(\frac12 + 2\var(x_1)\right) + (1-x_i)\left(\frac12 - 2\var(x_1)\right)
\]
Since $x_i > \frac12$ (by assumption), and $\var(x_1) > 0$ (by assumption), we have that
\beq\label{eq:simplesalsa-pr}
p_r > \frac12 \text{ and } p_r \le x_i
\eeq 
First, assume that $i$ is unbiased. Let $p$ be the probability that $i$ accepts the recommendation. Therefore, the probability that the recommended book was \ttr\ given that $i$ accepted the recommendation is given by
\[
\frac{p_rp}{p_rp  + (1-p_r)p} = p_r \le x_i
\]
Therefore, SimpleSALSA is not polarizing.

Now, assume that $i$ is biased. This implies $i$ accepts the recommendation of a \ttr\ book with probability $x_i$ and that of a \ttb\ book with probability $1-x_i$. Therefore, the probability that the recommended book was \ttr\ given that $i$ accepted the recommendation is given by
\[
\frac{p_rx_i}{p_rx_i + (1-x_i)(1-p_r)} > \frac{p_rx_i}{p_rx_i + p_r(1-x_i)}\text{ (since $p_r > \frac12$, from \eqref{eq:simplesalsa-pr})} = x_i
\]
Therefore, by definition, SimpleSALSA is polarizing.
\end{proof}

\begin{lem}\label{lem:icf}
In the limit as $n\rightarrow\infty$ and as $T\rightarrow\infty$, SimpleICF  is polarizing with respect to $i$ if and only if $i$ is biased.
\end{lem}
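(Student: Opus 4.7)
The plan is to mirror the SimpleSALSA argument in Lemma~\ref{lem:salsa}. Let $\k$ denote the neighbor of $i$ chosen in the first step of SimpleICF, and let $p_r$ be the limiting probability (as $n,T\to\infty$) that SimpleICF outputs a \ttr\ book, before the accept/reject step. Once $p_r$ is identified, the polarization conclusion will follow from the same conditional-probability calculation used for SimpleSALSA; the new ingredient is the asymptotic analysis of the $\arg\max$ as $T\to\infty$.

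I would first compute the two-step random-walk probabilities using Lemma~\ref{lem:ssln}. For any \ttr\ book $\k$, the probability that a two-step walk from $\k$ ends at a specific \ttr\ book $j$ converges to
\[
\frac{1}{|N(\k)|}\sum_{i'\in V_1}\frac{Z_{i'\k}Z_{i'j}}{|N(i')|} \;\to\; \frac{2n}{mk}\cdot\frac{mk(\tfrac14+\var(x_1))}{n^2} \;=\; \frac{\tfrac12+2\var(x_1)}{n},
\]
while the probability of ending at a specific \ttb\ book converges to $(\tfrac12-2\var(x_1))/n$; the roles of \ttr\ and \ttb\ swap when $\k$ is \ttb. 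Since $\var(x_1)>0$, the per-book probability for a book matching $\k$'s color is strictly larger than that for a book of the opposite color. For any fixed (large) $n$, applying the Strong Law of Large Numbers to the $2n$ counts as $T\to\infty$ shows that the $\arg\max$ of $\texttt{count}(j)$ lies, with probability $1$, among the books of $\k$'s color. Because $\k$ is chosen uniformly from $N(i)$, which by Lemma~\ref{lem:ssln} contains $x_i k$ \ttr\ and $(1-x_i)k$ \ttb\ neighbors in the limit, $\k$ is \ttr\ with probability $x_i$. Hence $p_r = x_i$.

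The polarization analysis is then exactly the one used for SimpleSALSA. Assume without loss of generality that $x_i>\tfrac12$. If $i$ is unbiased, the conditional probability that the accepted recommendation is \ttr\ equals $p_r = x_i$, which fails the strict inequality required by the definition of polarization. If $i$ is biased, acceptance probabilities $x_i$ for \ttr\ and $1-x_i$ for \ttb\ yield
\[
\prob[\ttr\mid\text{accepted}] \;=\; \frac{p_r\,x_i}{p_r\,x_i + (1-p_r)(1-x_i)} \;=\; \frac{x_i^{2}}{x_i^{2}+(1-x_i)^{2}},
\]
which strictly exceeds $x_i$ precisely when $x_i>\tfrac12$ (and is symmetrically less than $x_i$ when $x_i<\tfrac12$). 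Therefore SimpleICF is polarizing if and only if $i$ is biased.

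The main obstacle is justifying the $\arg\max$ concentration rigorously under the prescribed order of limits. For any fixed $n$, the per-book probability gap between matching and opposite colors is $O(1/n)$, so one needs $T$ large relative to $n$ for empirical counts to distinguish them; taking $T\to\infty$ after $n\to\infty$, as the statement specifies, handles this cleanly via the Strong Law. Beyond this subtlety, the argument is a routine adaptation of Lemma~\ref{lem:salsa}.
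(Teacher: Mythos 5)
Your proof follows the paper's argument essentially step for step: the same decomposition of $p_r$ over the first-step neighbor, the same use of Lemma~\ref{lem:ssln} to evaluate the two-step walk probabilities, the same SLLN argument that the $\arg\max$ concentrates on books of the intermediate node's color (yielding $p_r = x_i$), and the same conditional-probability calculation from the SimpleSALSA lemma to conclude. The proposal is correct and matches the paper's proof.
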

\begin{proof}
Assume without loss of generality that $x_i > \frac12$.

Let $p_r$ be the probability that SimpleICF recommends a \ttr\ book. For a node $j\in N(i)$,  let $q_{j\ttr}$ be the probability that after $T$ two-step random walks starting at $j$, the node with the largest value of $\texttt{count(j)}$, \ie, $j^*$, is \ttr, and $q_{j\ttb}$ be the corresponding probability that $j^*$ is \ttb. Then,
\begin{align*}
p_r &= \sum_{\substack{j_1\in N(i)\\j_1 \text{ is \ttr}}} \prob[i\xrightarrow{1} j_1] q_{j_1\ttr} + \sum_{\substack{j_2\in N(i)\\j_2 \text{ is \ttb}}} \prob[i\xrightarrow{1} j_2] q_{j_2\ttr}\\
&= \sum_{\substack{j_1\in N(i)\\j_1 \text{ is \ttr}}} \frac1{|N(i)|} q_{j_1\ttr} + \sum_{\substack{j_2\in N(i)\\j_2 \text{ is \ttb}}} \frac1{|N(i)|}  q_{j_2\ttr}\\
&= \sum_{\substack{j_1\in V_2\\j_1 \text{ is \ttr}}} \frac{Z_{ij_1}}{|N(i)|} q_{j_1\ttr} + \sum_{\substack{j_2\in V_2\\j_2 \text{ is \ttb}}} \frac{Z_{ij_1}}{|N(i)|}  q_{j_2\ttr}
\end{align*}
Consider $T$ two-step random walks starting at a node $j_1\in N(i)$. Observe that $q_{j_1\ttr}$ is exactly the probability that after these $T$ random walks, there exists a \ttr\ node, say $j$, such that $\texttt{count(j)} > \texttt{count(j')}$ for all \ttb\ nodes $j'$. However, as $T\rightarrow\infty$,
\[
 \prob[\text{for all \ttb\ books }j'\in V_2,\ \texttt{count(j)} > \texttt{count(j')}] = \prob[\text{for all \ttb\ books }j'\in V_2,\ \prob[j_1\xrightarrow{2}j] > \prob[j_1\xrightarrow{2} j']]
\]
since as $T\rightarrow\infty$, $\texttt{count(j)} \rightarrow T\cdot\prob[j_1\xrightarrow{2}j]$ (by the Strong Law of Large Numbers). Therefore,
\begin{align*}
q_{j_1\ttr} &= \prob[\text{for all \ttb\ books }j'\in V_2,\ \prob[j_1\xrightarrow{2}j] > \prob[j_1\xrightarrow{2} j']]
\end{align*}
Observe that for two \ttr\ books $j_1$ and $j$,
\[
\prob[j_1\xrightarrow{2}j] = \sum_{i'\in N(j_1) \cap N(j)} \frac1{|N(j_1)|}\frac1{|N(i')|} = \sum_{i'\in V_1} \frac{Z_{i'j_1}Z_{i'j}}{|N(j_1)||N(i')|}
\]
By Lemma~\ref{lem:ssln}, in the limit as $n\rightarrow \infty$, with probability 1,
\[
\prob[j_1\xrightarrow{2}j] \rightarrow \frac1{k}\frac1{mk/2n}\frac{mk^2(\frac14+\var(x_1))}{n^2} = \frac1n\left(\frac12 + 2\var(x_1)\right)
\]
Similarly, for a \ttb\ book $j'$, in the limit as $n\rightarrow \infty$, with probability 1,
\[
\prob[j_1\xrightarrow{2}j'] \rightarrow \frac1{k}\frac1{mk/2n}\frac{mk^2(\frac14-\var(x_1))}{n^2} = \frac1n\left(\frac12 - 2\var(x_1)\right)
\]
Since $\var(x_1) > 0$, in the limit as $n\rightarrow \infty,\ \prob[j_1\xrightarrow{2}j] > \prob[j_1\xrightarrow{2}j']$ with probability 1. Therefore, $q_{j_1\ttr} = 1$. By symmetry $q_{j_2\ttr} = 1- q_{j2\ttb} = 0$.
Moreover, by Lemma~\ref{lem:ssln}, in the limit as $n\rightarrow\infty,\  \sum_{\substack{j_1\in V_2\\j_1 \text{ is \ttr}}} \frac{Z_{ij_1}}{|N(i)|} = x_i$, with probability 1. 
Therefore, as $n\rightarrow\infty$,
\beq\label{eq:simpleicf-pr}
p_r = x_i
\eeq 
The rest of the analysis is identical to Lemma~\ref{lem:salsa}.
\end{proof}
This completes the proof of Theorem~\ref{thm:salsa-icf-n->infty}.
\end{proof}

\begin{proof}[Proof of Theorem~\ref{thm:ppr-n->infty}]
Assume, without loss of generality, that $x_i > \frac12$.

Let $p_r$ be the probability that SimplePPR recommends a \ttr\ book to $i$. This probability is exactly equal to the probability that after $T$ three-step random walks starting at $i$  there exists a \ttr\ node, say $j$, such that  such that $\texttt{count(j)} > \texttt{count(j')}$ for all \ttb\ nodes $j'$. However, as $T\rightarrow\infty$,
\[
 \prob[\text{for all \ttb\ books }j'\in V_2,\ \texttt{count(j)} > \texttt{count(j')}] = \prob[\text{for all \ttb\ books }j'\in V_2,\ \prob[i\xrightarrow{3}j] > \prob[i\xrightarrow{3} j']]
\]
since as $T\rightarrow\infty$, $\texttt{count(j)} \rightarrow T\cdot\prob[i\xrightarrow{3}j]$ with probability 1 (by the Strong Law of Large Numbers). Therefore,
\[
p_r = \prob[\text{for all \ttb\ books }j'\in V_2,\ \prob[i\xrightarrow{3}j] > \prob[i\xrightarrow{3} j']]
\]
For a \ttr\ book $j\in V_2$,
\begin{align*}
\prob[i\xrightarrow{3}j] &= \sum_{\substack{j_1\in N(i)\\j_1 \text{ is \ttr}}} \prob[i\xrightarrow{1} j_1]\prob[j_1\xrightarrow{2} j]  + \sum_{\substack{j_2\in N(i)\\j_2 \text{ is \ttb}}} \prob[i\xrightarrow{1} j_2]\prob[j_2\xrightarrow{2} j] \\
\prob[i\xrightarrow{3}j] &= \sum_{\substack{j_1\in N(i)\\j_1 \text{ is \ttr}}} \frac1{|N(i)|}\prob[j_1\xrightarrow{2} j]  + \sum_{\substack{j_2\in N(i)\\j_2 \text{ is \ttb}}} \frac1{|N(i)|}\prob[j_2\xrightarrow{2} j]\\
\prob[i\xrightarrow{3}j] &= \sum_{\substack{j_1\in V_2\\j_1 \text{ is \ttr}}} \frac{Z_{ij_1}}{|N(i)|}\prob[j_1\xrightarrow{2} j]  + \sum_{\substack{j_2\in V_2\\j_2 \text{ is \ttb}}} \frac{Z_{ij_2}}{|N(i)|}\prob[j_2\xrightarrow{2} j]
\end{align*}
As we showed in the proof of Lemma~\ref{lem:icf}, in the limit as $n\rightarrow \infty$,
\[
\prob[j_1\xrightarrow{2}j] \rightarrow \frac1n\left(\frac12 + 2\var(x_1)\right) \text{ and (by symmetry) }\prob[j_2\xrightarrow{2}j] \rightarrow \frac1n\left(\frac12 - 2\var(x_1)\right)
\]
 with probability 1. Moreover, by Lemma~\ref{lem:ssln}, in the limit as $n\rightarrow\infty,\  \sum_{\substack{j_1\in V_2\\j_1 \text{ is \ttr}}} \frac{Z_{ij_1}}{|N(i)|} \rightarrow x_i$, with probability 1. 
 Therefore, with probability 1,
\[
\prob[i\xrightarrow{3}j] \rightarrow \frac{x_i}{n}\left(\frac12 + 2\var(x_1)\right) + \frac{1-x_i}{n}\left(\frac12 - 2\var(x_1)\right) 
\]
Similarly, for a \ttb\ book $j'\in V_2$, in the limit as $n\rightarrow \infty$, with probability 1,
\[
\prob[i\xrightarrow{3}j'] \rightarrow \frac{x_i}{n}\left(\frac12 - 2\var(x_1)\right) + \frac{1-x_i}{n}\left(\frac12 + 2\var(x_1)\right)
\] 
Since $x_i > \frac12$ and $\var(x_1) > 0$,
\[
\prob[i\xrightarrow{3}j] > \prob[i\xrightarrow{3}j']
\]
with probability 1. In other words, $p_r = 1$. So, the probability that a book recommended by SimplePPR was \ttr\ given that it was accepted is exactly $p_r$ regardless of whether $i$ is biased or unbiased. Therefore, SimplePPR is polarizing.

\end{proof}

\end{document}